\newcommand{\figref}[1]{Fig.~\ref{#1}}
\newcommand{\tabref}[1]{Table~\ref{#1}}
\newcommand{\argref}[1]{Algorithm~\ref{#1}}
\newcommand{\defref}[1]{Definition~\ref{#1}}
\newcommand{\probref}[1]{Problem~\ref{#1}}
\newcommand{\propref}[1]{Proposition~\ref{#1}}
\newcommand{\propsref}[2]{Propositions~\ref{#1} and \ref{#2}}
\newcommand{\lemref}[1]{Lemma~\ref{#1}}
\newcommand{\thmref}[1]{Theorem~\ref{#1}}
\newcommand{\corref}[1]{Corollary~\ref{#1}}
\newcommand{\appref}[1]{Appendix~\ref{#1}}
\newcommand{\secref}[1]{Section~\ref{#1}}
\newcommand{\F}{\mathbb{F}}
\newcommand{\G}{\mathbb{G}}
\newcommand{\R}{\mathbb{R}}
\newcommand{\Z}{\mathbb{Z}}
\renewcommand{\S}{\mathcal{S}}
\newcommand{\K}{\mathcal{K}}
\newcommand{\Kp}{\mathcal{K}_{\mathsf{p}}}
\newcommand{\Ks}{\mathcal{K}_{\mathsf{s}}}
\newcommand{\E}{\mathcal{E}}
\newcommand{\Edyn}{\mathcal{E}_{\mathrm{dyn}}}
\newcommand{\M}{\mathcal{M}}
\newcommand{\C}{\mathcal{C}}
\newcommand{\Gen}{\mathsf{Gen}}
\newcommand{\Enc}{\mathsf{Enc}}
\newcommand{\Dec}{\mathsf{Dec}}
\newcommand{\pk}{\mathsf{pk}}
\newcommand{\sk}{\mathsf{sk}}
\newcommand{\Ecd}{\mathsf{Ecd}}
\newcommand{\Dcd}{\mathsf{Dcd}}
\newcommand{\Qtz}{\mathcal{Q}}
\newcommand{\tr}{\mathop{\mathrm{tr}}\limits}
\renewcommand{\vec}{\mathop{\mathrm{vec}}\limits}
\newcommand{\diag}[1]{\mathop{\mathrm{diag}}\limits\left(#1\right)}
\newcommand{\const}{\mathrm{const.}}
\newcommand{\argmin}{\mathop{\mathrm{arg~min}}\limits}
\newcommand{\ind}[2]{\mathbf{1}_{#1}\left(#2\right)}
\newcommand{\U}{\mathcal{U}}
\newcommand{\N}{\mathcal{N}}
\newcommand{\EV}[1]{\mathbb{E}\left[#1\right]}
\newcommand{\D}{\mathcal{D}}
\newcommand{\Denc}{\mathcal{D}_{\Enc}}
\newcommand{\bfzero}{\mathbf{0}}
\newtheorem{definition}{Definition}
\newtheorem{proposition}{Proposition}
\newtheorem{lemma}{Lemma}
\newtheorem{theorem}{Theorem}
\newtheorem{corollary}{Corollary}
\newtheorem{problem}{Problem}
\newtheorem{remark}{Remark}
\let\MYcaption\@makecaption
\let\@makecaption\MYcaption
\def\BibTeX{{\rm B\kern-.05em{\sc i\kern-.025em b}\kern-.08em
    T\kern-.1667em\lower.7ex\hbox{E}\kern-.125emX}}
\begin{document}

\thispagestyle{empty}
\hspace{-4.5mm}
\fbox{
\begin{minipage}{\textwidth-5mm}\scriptsize
© 20XX IEEE.  Personal use of this material is permitted.  Permission from IEEE must be obtained for all other uses, in any current or future media, including reprinting/republishing this material for advertising or promotional purposes, creating new collective works, for resale or redistribution to servers or lists, or reuse of any copyrighted component of this work in other works.
\end{minipage}
}
\newpage
\setcounter{page}{0}

\title{Designing Optimal Key Lengths and \\ Control Laws for Encrypted Control Systems based on Sample Identifying Complexity \\ and Deciphering Time}

\author{%
    Kaoru Teranishi, \IEEEmembership{Student Member, IEEE},
    Tomonori Sadamoto, \IEEEmembership{Member, IEEE}, \\
    Aranya Chakrabortty, \IEEEmembership{Senior Member, IEEE},
    and Kiminao Kogiso, \IEEEmembership{Member, IEEE}
    \thanks{%
        % This paragraph of the first footnote will contain the date on 
        % which you submitted your paper for review. It will also contain support 
        % information, including sponsor and financial support acknowledgment. For 
        % example, ``This work was supported in part by the U.S. Department of 
        % Commerce under Grant BS123456.''
        This work was supported by JSPS Grant-in-Aid for JSPS Fellows Grant Number JP21J22442.
    }
    \thanks{%
        Kaoru Teranishi, Tomonori Sadamoto, and Kiminao Kogiso are with the Department of Mechanical and Intelligent Systems Engineering, The University of Electro-Communications, Chofu, Tokyo, 182-8585, Japan (e-mail: teranishi@uec.ac.jp, sadamoto@uec.ac.jp, kogiso@uec.ac.jp).
    }
    \thanks{%
        Kaoru Teranishi is also a Research Fellow of Japan Society for the Promotion of Science.
    }
    \thanks{%
        Aranya Chakrabortty is with the Department of Electrical and Computer Engineering, North Carolina State University, Raleigh, NC 27695 USA (e-mail: achakra2@ncsu.edu).
    }
}

\maketitle

\begin{abstract}

In the state-of-the-art literature on cryptography and control theory, there has been no systematic methodology of constructing cyber-physical systems that can achieve desired control performance while being protected against eavesdropping attacks.
In this paper, we tackle this challenging problem.
We first propose two novel notions referred to as \textit{sample identifying complexity} and \textit{sample deciphering time} in an encrypted-control framework.
The former explicitly captures the relation between the dynamical characteristics of control systems and the level of identifiability of the systems while the latter shows the relation between the computation time for the identification and the key length of a cryptosystem.
Based on these two tractable new notions, we propose a systematic method for designing the both of an optimal key length to prevent system identification with a given precision within a given life span of systems, and of an optimal controller to maximize both of the control performance and the difficulty of the identification.
The efficiency of the proposed method in terms of security level and realtime-ness is investigated through numerical simulations.
To the best of our knowledge, this paper first connect the relationship between the security of cryptography and dynamical systems from a control-theoretic perspective. 

\end{abstract}

\begin{IEEEkeywords}
Cyber-physical system, cyber-security, encrypted control, homomorphic encryption, eavesdropping attack, system identification.
\end{IEEEkeywords}

\section{Introduction}

\subsection{Motivational literature review}

Cyber-physical systems have attracted the attention in numerous areas, such as power grids, transportation, manufacturing, and healthcare~\cite{Lee08,Humayed17,Dibaji19}.
Integrating communication and computation layers with a physical layer, cyber-physical systems are expected to overwhelm the traditional systems with respect to efficiency, reliability, and sustainability~\cite{Lee08,Wang15}.
Meanwhile, cyber-physical systems often face security threats in exchange for the advantages because, in general, they communicate with a public and untrustworthy computer, e.g., cloud, over insecure channels for decision making.

One of major security threats is the eavesdropping attack that tries to disclose confidential information of cyber-physical systems~\cite{Teixeira15_1}.
Once an adversary complete the attacks, more destructive and undetectable attacks can be designed based on a target system model learned by the disclosed information~\cite{Chong19}.
Therefore, it is crucial for realizing secure cyber-physical systems to prevent eavesdropping attacks.

To fulfill this objective, we definitely need a \textit{measure} for quantifying the security level against the attacks.
Some studies have employed information-theoretic measures, such as mutual information and directed information, for designing estimators and controllers with information leakage constraints under the presence of eavesdroppers~\cite{Nekouei19}.
Additionally, differential privacy~\cite{Dwork06}, another well-known measure used in information community, has been adopted for private filtering and controls of dynamical systems~\cite{Cortes16,Hassan20}.
However, these existing measures are not suitable for \textit{dynamical} systems because it is not clear that the systems should satisfy how the level of security.
Furthermore, a controller design method based on the measures has an intrinsic trade-off between the security and quality of controls due to noise injection~\cite{Nekouei19,Cortes16}.
It should be noted here that some recent papers have proposed control-theoretic security quantities~\cite{Dibaji19,Murguia20,Sandberg10,Milosevic20,Feng21,Cetinkaya20}.
However, the quantities cannot measure the security level against eavesdropping attacks because they focus on other attacks.

Encrypted control~\cite{Darup20_3} is the state-of-the-art technology for preventing eavesdropping attacks without noise injection.
Contrary to the information-oriented methods, the performance degradation in encrypted control systems can be ignored by increasing a key length of cryptosystem~\cite{Kogiso18_1}.
Moreover, for a small key length, appropriate quantizers mitigate the quantization errors due to encryption~\cite{Kishida19,Teranishi19_3}.
Thus, encrypted control is a promising framework for achieving the superior security and control performance of cyber-physical systems.
In fact, various encrypted control methods have been developed recently by using partially, somewhat, and (leveled) fully homomorphic encryption~\cite{Kogiso15,Farokhi17,Kim16,Darup18_2,Darup19_1,Alexandru20_3,Alexandru20_2,Fritz19,Ristic20,Suh21}.
Moreover, their feasibility has been verified through implementation to a drone~\cite{Cheon18_1}, fog-computing environment~\cite{Teranishi20_4}, and field-programmable gate array~\cite{Tran20}.
However, the security level of encrypted control systems has not been analyzed and quantified.

\subsection{Contribution}

This study considers an attack scenario that an adversary eavesdrops and then identifies the system matrix of a stochastic closed-loop system with an encrypted controller by using collected encrypted-data.
Under this scenario, we aim to answer the following quenstions:
\begin{itemize}
    \item What is the optimal controller to make the identification accuracy within a certain value, and subsequently,
    \item what is the optimal key length needed to secure the closed-loop system within a life span of the system?
\end{itemize}
To this end, we introduce two novel security quantities, \textit{sample identifying complexity} and \textit{sample deciphering time}.
This type of quantification is not reported in any papers on cryptography.

The sample identifying complexity is derived as a lower bound for the total variance, i.e., the inverse of precision, of Bayesian estimation by an adversary.
The sample deciphering time is computation time for breaking encrypted data without a secret key to obtain a data set for the estimation.
The security in this study is defined based on these quantities.
Roughly speaking, we say an encrypted control system is secure if the adversary cannot identify the system matrix with a certain precision within a life span of the system.
The formal definition of the security will be described later.

The sample deciphering time is introduced in two cases with static-key encryption and dynamic-key encryption.
Static-key encryption is traditional public-key encryption of which the key pair is identical throughout the communication.
In contrast, a key pair in dynamic-key encryption~\cite{Teranishi20_5} is updated at a short time interval, e.g., a sampling period.
Although dynamic-key encryption would improve the security level of encrypted control systems, its security has not yet been proved.
We extend the dynamic-key encryption scheme in~\cite{Teranishi20_5} and provide a security proof of the extended scheme.

Using the security quantities, we formulate a design problem of optimal key length and controller.
The optimal controller is designed to maximize the sample identifying complexity.
In other words, the controller maximizes the difficulty of the system identification.
More interestingly, such a controller is provided as the standard stochastic cheap controller improving the stability degree of a closed-loop system.
This fact means, in controller design, there is no trade-off between the security level and the control performance.

After designing the optimal controller, we design the optimal key length to secure an encrypted control system.
The optimal key length is obtained as the minimum key length to make the sample deciphering time longer than the system's life span.
This key length is beneficial for reducing implementation costs of an encrypted control system while keeping the security level because the size of key length has a trade-off between ciphertext strength and computation costs of encryption and decryption algorithms.

\subsection{Outline}

\secref{sec:preliminaries} summarizes notations and a definition of homomorphic encryption.
The ElGamal encryption, an example of a multiplicative homomorphic encryption scheme, is also introduced.
\secref{sec:problem_setting} describes the attack scenario considered in this study.
We define the security of encrypted control systems and formulate a design problem of the optimal key length and controller.
\secref{sec:curves} proposes sample identifying complexity and sample deciphering time.
They are used to understand the relationships among a key length, controller, and the number of samples for system identification.
\secref{sec:optimal_design} provides the solution to the problem based on the security quantities.
Additionally, we show how the security quantities can be used for other design problems in encrypted control systems.
\secref{sec:simulation} demonstrates the validity of the proposed method by numerical simulations.
\secref{sec:conclusion} concludes this paper and presents some remarks on the results of this study.

\section{Preliminaries}\label{sec:preliminaries}

\subsection{Notation}

The sets of real numbers, integers, security parameters, public keys, secret keys, plaintexts, and ciphertexts are denoted by $\R$, $\Z$, $\S$, $\Kp$, $\Ks$, $\M$, and $\C$, respectively.
We define the sets of integers $\Z^{+}\coloneqq\{z\in\Z\mid 0\le z\}$ and $\Z_{n}\coloneqq\{z\in\Z\mid 0\le z<n\}$.
The set of $n$-dimensional real column-vectors is denoted by $\R^{n}$, and that of $m$-by-$n$ real-valued matrices is denoted by $\R^{m\times n}$.
The $i$th element of a vector $v\in\R^{n}$ is denoted by $v_{i}$, and the $\ell_{2}$ norm and the maximum norm of $v$ are denoted by $\|v\|$ and $\|v\|_{\infty}$, respectively.
The $i$th column vector and $(i,j)$ entry of a matrix $M\in\R^{m\times n}$ are denoted by $M_{i}$ and $M_{ij}$, respectively.
The max norm and column stack vector of $M$ are defined by $\|M\|_{\max}\coloneqq\max_{i,j}\{|M_{ij}|\}$ and $\vec(M)\coloneqq[M_{1}^{\top}\cdots M_{n}^{\top}]^{\top}$, respectively.
The cardinality of a set $\mathcal{A}$ is denoted by $|\cal{A}|$.
The Gaussian distribution with a mean $\mu$ and a variance-covariance matrix $\Sigma$ is denoted by $\N(\mu,\Sigma)$.
The probability density function of $\N(\mu,\Sigma)$ is denoted by $f(x;\mu,\Sigma)$.

\begin{definition}
    Let $\mathcal{A}$ be a finite set and $X$ be a random variable.
    If $\Pr(X=a)=1/|\mathcal{A}|$, $\forall a\in \mathcal{A}$, then we say $X$ follows the discrete uniform distribution over $\mathcal{A}$ and is denoted as $X\sim\U(\mathcal{A})$.
\end{definition}

\begin{definition}[negligible function~\cite{Katz15}]
    We say a function $\epsilon:\Z^{+}\setminus\{0\}\to\R$ is negligible if for every positive integer $c>0$ there exists $N\in\Z$ such that $|\epsilon(n)|<n^{-c}$ holds for all $n>N$.
\end{definition}

\subsection{Homomorphic encryption and its example}

This section describes the definition and example of homomorphic encryption to introduce the encrypted-control framework.
One can refer~\cite{Acar18} for the detailed survey of homomorphic encryption.

A public-key encryption scheme is a triplet $(\Gen,\Enc,\Dec)$, where $\Gen:\S\to\Kp\times\Ks:k\mapsto(\pk,\sk)$ is a key generation algorithm, $\Enc:\Kp\times\M\to\C:(\pk,m)\mapsto c$ is an encryption algorithm, $\Dec:\Ks\times\C\to\M:(\sk,c)\mapsto m$ is a decryption algorithm, $k$ is a security parameter, e.g., a key length, and $(\pk,\sk)=\Gen(k)$ is a pair of public key and secret key.
$\Enc$ and $\Dec$ perform elementwise for a vector and a matrix.
Public-key encryption schemes must satisfy $\Dec(\sk,\Enc(\pk,m))=m$ for all $m\in\M$ and $(\pk,\sk)$ generated by $\Gen$.

\begin{definition}\label{def:mhe}
    We say $(\Gen,\Enc,\Dec)$ is multiplicative homomorphic encryption if $\Dec(\sk,c\boxtimes c')=mm'$ for all $m,m'\in\M$ and $c,c'\in\C$ satisfying $\Enc(\pk,m)=c$ and $\Enc(\pk,m')=c'$, where $\boxtimes:\C\times\C\to\C$ is a binary operation over $\C$.
    Similarly, additive homomorphic encryption is defined with $\boxplus:\C\times\C\to\C$.
\end{definition}

An example of multiplicative homomorphic encryption includes the ElGamal encryption~\cite{ElGamal85}.
Its algorithms are $\Gen:k\mapsto(\pk,\sk)=((p,q,g,h),s)$, $\Enc:(\pk,m)\mapsto c=(g^{r}\bmod p,mh^{r}\bmod p)$, and $\Dec:(\sk,(c_{1},c_{2}))\mapsto{c_{1}}^{-s}c_{2}\bmod p$, where $q$ is a $k$ bit prime, $p=2q+1$ is a safe prime, $g$ is a generator of a cyclic group $\G\coloneqq\{g^{i}\bmod p\mid i\in\Z_{q}\}=\M\subset\Z_{p}\setminus\{0\}$ such that $g^{q}\bmod p=1$, $h=g^{s}\bmod p$, $\C=\G^{2}$, and $r,s\sim\U(\Z_{q})$.
Additionally, multiplicative homomorphism is $\Dec(\sk,\Enc(\pk,m)\ast\Enc(\pk,m')\bmod p)=mm'\bmod p$, where $\ast$ is the Hadamard product.

\section{Attack Scenario and Problem Setting}\label{sec:problem_setting}

Consider a plant described by the discrete-time stochastic linear system
\begin{equation}
    x_{t+1}=A_{p}x_{t}+B_{p}u_{t}+w_{t},
    \label{eq:plant}
\end{equation}
where $t\in\Z^{+}$ is a time step, $x\in\R^{n}$ is a state, $u\in\R^{m}$ is an input, and $w\in\R^{n}$ is an i.i.d. random variable following the Gaussian distribution $\N(\bfzero,L^{-1})$ with the zero vector $\bfzero$ and a precision matrix $L$.
Assume that $(A_{p},B_{p})$ is controllable, and the initial state is given by $x_{0}\sim\N(\bfzero,L^{-1})$.
A state-feedback controller
\begin{equation}
    u_{t}=Fx_{t},
    \label{eq:controller}
\end{equation}
which is installed on a computer over a network, e.g., cloud, is employed for stabilizing \eqref{eq:plant}, where a feedback gain $F$ is to be designed.
Note that output-feedback controllers can also be considered although we use a state-feedback controller for the simplicity of discussion.

The networked control system with \eqref{eq:plant} and \eqref{eq:controller} has risks of eavesdropping attacks because the plant and controller communicate with each other via network links.
This study considers encrypted control proposed in~\cite{Kogiso15} as a secure control framework against the attacks.
An encrypted control system includes an encrypter $\Enc$ and decrypter $\Dec$ in its feedback loop; see \figref{fig:scenario}.
Note that a sensor (encrypter) and an actuator (decrypter) in this study are assumed to be installed on a unified computer of plant side.
An encrypted controller of \eqref{eq:controller} with multiplicative homomorphic encryption of \defref{def:mhe} is defined as
\[
    (c_{F},c_{x_{t}})\mapsto c_{U_{t}}=
    \begin{bmatrix}
        c_{F_{11}}\boxtimes c_{x_{1,t}} & \cdots & c_{F_{1n}}\boxtimes c_{x_{n,t}} \\
        \vdots & \ddots & \vdots \\
        c_{F_{m1}}\boxtimes c_{x_{1,t}} & \cdots & c_{F_{mn}}\boxtimes c_{x_{n,t}}
    \end{bmatrix},
\]
where $c_{F}=\Enc(\pk,F)$, and $c_{x_{t}}=\Enc(\pk,x_{t})$.
An input is restored as
\[
    u_{t}=
    \begin{bmatrix}
        \sum_{j=1}^{n}\Dec(\sk,c_{U_{1j,t}}) \\
        \vdots \\
        \sum_{j=1}^{n}\Dec(\sk,c_{U_{mj,t}})
    \end{bmatrix},
\]
and it approximately equals to an input of \eqref{eq:controller} if quantization errors caused by the encryption are sufficiently small.
Thus, the dynamics of the encrypted control system is obtained as
\begin{equation}
    x_{t+1}=Ax_{t}+w_{t},\quad A\coloneqq A_{p}+B_{p}F.
    \label{eq:system}
\end{equation}
By using an encrypted-control framework, conventional controllers can be used while their gains and signals over network links are encrypted.

\begin{figure}[!t]
    \centering
    \includegraphics[scale=1]{./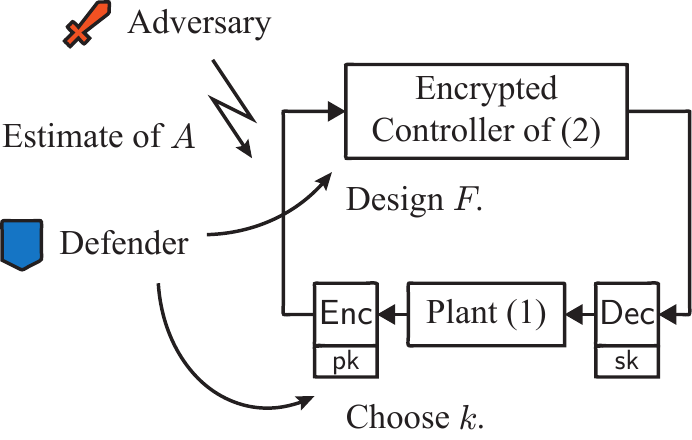}
    \caption{Attack scenario and actions of adversary and defender.}
    \label{fig:scenario}
\end{figure}

We consider an attack scenario to identify the dynamics of the encrypted control system.
The dynamics must be secret even though an adversary eavesdrops and deciphers the ciphertexts because he/she would exploit it as an initial step for executing more sophisticated attacks, such as stealth attacks.
As a result, the total security level of encrypted control systems can be improved by preventing the identification attack.
The worst scenario for a defender is Bayesian estimation of the dynamics, i.e., $A$ in \eqref{eq:system}, with deciphered data because the estimation is the best in terms of the variance of estimator.
This attack is formulated as follows:

\begin{definition}\label{def:adversary}
    An adversary follows the protocol below:
    \begin{enumerate}
        \item Given $T\in\Z^{+}$, collect $\Denc\coloneqq\{\Enc(\pk,x_{t})\}_{t=0}^{T}$ by eavesdropping attacks.
        \item Expose $\D\coloneqq\{x_{t}\}_{t=0}^{T}$ by breaking the ciphertexts in $\Denc$ using a computer of which performance is $\Upsilon$~floating-point operations per second (FLOPS).
        \item Choose a prior probability $p(A)=f(\vec(A);\mu,\Lambda^{-1})$ based on his/her knowledge about a target control system.
        Then, estimate a posterior probability $p(A|\D)=f(\vec(A);\hat{\mu}(T),\hat{\Lambda}^{-1}(T))$ by Bayesian estimation with $p(A)$ and $\D$.
    \end{enumerate}
\end{definition}

An adversary aims to identify a system matrix $A$ as a posterior probability $p(A|\D)$, and an estimation $\hat{A}$ is given by $\vec(\hat{A})=\hat{\mu}(T)$.

Is the encrypted control system secure under what conditions in these settings?
In this paper, the system is said to be secure if identification of $A$ with a certain precision is impossible within a given period.
In particular, the security in the attack scenario is defined as follows, where we use the fact that the trace of a variance-covariance matrix can be used for a measure of the precision of the estimation since it represents the total variance:

\begin{definition}\label{def:security}
    Let $\tau_{c}$ be a life span that represents a period until the system \eqref{eq:plant} is replaced, and $\gamma_{c}$ be an acceptable variance against adversary’s estimation.
    Define
    \[
        \tau(T,k)\coloneqq\text{time for executing step 2) in \defref{def:adversary}.}
    \]
    The encrypted control system in \figref{fig:scenario} is said to be \textit{secure} if there does not exist $T\in\Z^{+}$ satisfying
    \begin{equation}
        \EV{\tr(\hat{\Lambda}^{-1}(T))}<\gamma_{c}\land\tau(T,k)\le\tau_{c},
        \label{eq:security}
    \end{equation}
    where $\hat{\Lambda}(T)$ is defined in \defref{def:adversary}.
    If not, the system is said to be \textit{unsecure}.
\end{definition}

In \defref{def:security}, $\tau_{c}$ and $\gamma_{c}$ are the design parameters while the key length $k$ and the controller $F$ are the implicit decision variables.
As $\tau_{c}$ is taken larger for protecting the system during a longer period, the key length $k$ would be longer~\cite{Bernstein93}.
Although the longer $k$ is beneficial for ciphertext strength, it is not desirable in terms of implementation costs because the online computation costs of $\Enc$ and $\Dec$ with longer $k$ has to be larger~\cite{Chapter}.
In other words, the choice of a longer key length increases economic costs since a high performance computer is required for keeping the real-time operation of the control system.
Since there is such a trade-off, we will design $F$ for making the key length as short as possible.
Later we will show that the ease of identification relates to the stability of $A$ in \eqref{eq:system}.
This implies that the choice of a \textit{good} controller $F$ can make the key length $k$ shorter while making the precision of identification is within the tolerance $\gamma_{c}$.
In this light, we consider the following design problem for ensuring the dynamical system security.

\begin{problem}\label{prob:security}
    Consider the encrypted control system in \figref{fig:scenario} under the attack scenario in \defref{def:adversary}.
    Find $F$ and a minimum $k\in(0,\infty)$ such that the system is secure defined in \defref{def:security}.
\end{problem}

An essential question behind \probref{prob:security} is how the key length $k$, controller $F$, and the number of deciphered samples $T$ relate to the security.
The factors $k$ and $T$ are often taken into account in cryptography~\cite{Katz15} and sample complexity of computational learning theory~\cite{Kearns94}, respectively.
Unlike to this, we have to explicitly consider the controller gain as well as those two factors because the system of our interest has dynamics.
In view of this, \probref{prob:security} lies in between cryptography, learning theory, and control theory.
In the next section, we analyze the relation among $k$, $T$, $F$, and the security.

\begin{remark}
    Additive homomorphic encryption can also be used instead of using multiplicative homomorphic encryption.
    In such a case, an encrypted controller is defined as $(F,c_{x_{t}})\mapsto c_{u_{t}}=[F_{11}c_{x_{1,t}}\boxplus\cdots\boxplus F_{1n}c_{x_{n,t}},\ \cdots,\ F_{m1}c_{x_{1,t}}\boxplus\cdots\boxplus F_{mn}c_{x_{n,t}}]^{\top}$, and an input is given by $u_{t}=\Dec(\sk,c_{u_{t}})$.
    Note that the encrypted controller has an unencrypted parameter $F$, unlike one with multiplicative homomorphic encryption.
\end{remark}

\begin{remark}
    Although most algorithms to recover $\D$ from $\Denc$ would include integer operations rather than floating-point operations, the computational ability for integer operations in this study is assumed to be quantified by FLOPS. 
\end{remark}

\begin{remark}
    So far we have assumed that an adversary can exactly recover $\D$ from $\Denc$ without quantization errors caused by the encryption.
    In practice, $\Enc$ and $\Dec$ in \figref{fig:scenario} have to be equipped with an encoder $\Ecd:\R\to\M$ and decoder $\Dcd:\M\to\R$ that convert real numbers $(F,x_{t})$ to a plaintext space because the most existing homomorphic encryption schemes rely on arithmetic operations over integers.
    Therefore, quantization errors are always involved in the deciphered samples.
    However, for simplifying the following arguments, we do not consider the error, which is the \textit{worst} case scenario for the defender.
    The details of the quantization error analysis is described in \appref{app:quantization}. 
\end{remark}

\begin{remark}
    An adversary of this study is assumed to estimate a system matrix $A$ in \eqref{eq:system}.
    One may think that considering an estimation attack for a controller gain $F$ in \eqref{eq:controller} is also important.
    The attack can be treated as solving simultaneous equations for $F$ with $N$ independet data sets of $x$ and $u$.
    In such a case, the encrypted control system is said to be secure if $\tau(N,k)\le\tau_{c}$ is satisfied.
\end{remark}

\section{Sample Identifying-complexity Curve and Sample Deciphering-time Curve}\label{sec:curves}

This section introduces two novel quantities referred to as \textit{sample identifying-complexity curve} and \textit{sample deciphering-time curve} to clearly understand the relationship among $k$, $T$, $F$, and the security.

\subsection{Sample identifying-complexity curve}

We introduce the following lemma that connects the notion of the security in \defref{def:security} to the dynamics of \eqref{eq:system}. 

\begin{lemma}\label{lem:gamma}
    Consider the system in \figref{fig:scenario} under the attack in \defref{def:adversary}.
    Suppose $A$ in \eqref{eq:system} is Schur.
    Then, the parameters of a posterior probability $p(A|\D)$ in \defref{def:adversary} are described as
    \begin{align}
        \hat{\Lambda}(T)&=\Lambda+\sum_{t=0}^{T-1}(x_{t}\otimes I)L(x_{t}\otimes I)^{\top}, \label{eq:est_variance} \\
        \hat{\mu}(T)&=\hat{\Lambda}^{-1}(T)\left(\Lambda\mu+\sum_{t=0}^{T-1}(x_{t}\otimes I)Lx_{t+1}\right). \label{eq:est_mean}
    \end{align}
    Besides, the following relations hold:
    \begin{align}
        &\tr(\hat{\Lambda}^{-1}(T))\ge\cfrac{n^{2}}{\displaystyle\tr(\Lambda)+\tr(L)\sum_{t=0}^{T-1}\|x_{t}\|^{2}}, \label{eq:gamma_norm} \\
        &\EV{\tr(\hat{\Lambda}^{-1}(T))} \nonumber \\
        &\!\ge\!\gamma(T,F)\!\coloneqq\!\cfrac{n^{2}}{\displaystyle\tr(\Lambda)\!+\!\tr(L)\!\sum_{t=0}^{T-1}\tr\!\left(\sum_{i=0}^{t}A^{i}L^{-1}(A^{i})^{\top}\right)}.
        \label{eq:gamma}
    \end{align}
\end{lemma}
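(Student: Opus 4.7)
The plan is to derive the Gaussian posterior first, then obtain both trace bounds by elementary linear algebra together with one use of Jensen's inequality. Because the noise and prior are Gaussian and the model is linear in $A$, the posterior is conjugate Gaussian, so identifying $\hat{\Lambda}(T)$ and $\hat{\mu}(T)$ amounts to matching the quadratic and linear coefficients in $\vec(A)$.

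\textbf{Posterior.} First I would rewrite $x_{t+1}=Ax_t+w_t$ as linear in $a:=\vec(A)$ via the identity $Ax_t=(x_t^\top\otimes I_n)\vec(A)=(x_t\otimes I_n)^\top a$. The conditional log-density of $x_{t+1}$ given $x_t$ and $A$ is then $-\tfrac12(x_{t+1}-(x_t\otimes I)^\top a)^\top L(x_{t+1}-(x_t\otimes I)^\top a)$, up to constants independent of $a$. Summing over $t=0,\dots,T-1$ and adding the log-prior $-\tfrac12(a-\mu)^\top\Lambda(a-\mu)$ gives a quadratic in $a$; reading off its Hessian recovers \eqref{eq:est_variance}, and completing the square in the linear term recovers \eqref{eq:est_mean}.

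\textbf{Deterministic trace bound.} For \eqref{eq:gamma_norm}, note that $\hat{\Lambda}(T)\in\R^{n^2\times n^2}$ is positive definite, so $\tr(\hat{\Lambda}^{-1})=\sum_i 1/\lambda_i\ge n^2/\lambda_{\max}(\hat{\Lambda})\ge n^2/\tr(\hat{\Lambda})$. The trace of $\hat{\Lambda}(T)$ is then evaluated using the mixed-product identity $(x_t\otimes I_n)^\top(x_t\otimes I_n)=\|x_t\|^2 I_n$, which gives $\tr((x_t\otimes I)L(x_t\otimes I)^\top)=\tr(L\,\|x_t\|^2 I_n)=\|x_t\|^2\tr(L)$, hence $\tr(\hat{\Lambda}(T))=\tr(\Lambda)+\tr(L)\sum_{t=0}^{T-1}\|x_t\|^2$, which establishes \eqref{eq:gamma_norm}.

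\textbf{Expectation bound.} For \eqref{eq:gamma}, I would apply Jensen's inequality to the convex map $s\mapsto n^2/s$ on the positive random variable $s:=\tr(\Lambda)+\tr(L)\sum_{t=0}^{T-1}\|x_t\|^2$, so that $\EV{n^2/s}\ge n^2/\EV{s}$. It remains to compute $\EV{\|x_t\|^2}$. Unrolling $x_{t+1}=Ax_t+w_t$ from $x_0\sim\N(\bfzero,L^{-1})$ with independent $w_i\sim\N(\bfzero,L^{-1})$ yields $x_t=A^t x_0+\sum_{i=0}^{t-1}A^{t-1-i}w_i$; by independence, $\EV{x_t x_t^\top}=\sum_{i=0}^{t}A^i L^{-1}(A^i)^\top$, hence $\EV{\|x_t\|^2}=\tr\bigl(\sum_{i=0}^{t}A^i L^{-1}(A^i)^\top\bigr)$. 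Substituting into the Jensen bound yields \eqref{eq:gamma}; the Schur hypothesis on $A$ is what guarantees that these matrix sums stay uniformly bounded in $t$, which will be needed when $\gamma(T,F)$ is studied for large $T$ in the sequel.

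The main obstacle I anticipate is simply Kronecker-product bookkeeping: verifying that $Ax_t$ vec's to $(x_t\otimes I)^\top a$ with the transpose on the correct side, and that $(x_t\otimes I)^\top(x_t\otimes I)$ collapses cleanly to $\|x_t\|^2 I_n$ inside the trace so the scalar $\|x_t\|^2\tr(L)$ drops out. Once these conventions are fixed, the rest of the proof is a standard conjugate-Gaussian update, the reciprocal-trace inequality via $\lambda_{\max}$, and one application of Jensen together with the explicit linear-Gaussian covariance propagation.
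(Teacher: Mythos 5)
Your proposal is correct and follows essentially the same route as the paper's proof: the conjugate-Gaussian coefficient matching for \eqref{eq:est_variance}--\eqref{eq:est_mean}, the identity $\tr((x_t\otimes I)L(x_t\otimes I)^{\top})=\tr(L)\|x_t\|^{2}$ (the paper isolates this as a separate technical lemma), the eigenvalue bound $\tr(\hat{\Lambda}^{-1})\ge n^{2}/\tr(\hat{\Lambda})$, and the covariance propagation $\EV{x_tx_t^{\top}}=\sum_{i=0}^{t}A^{i}L^{-1}(A^{i})^{\top}$. If anything, your explicit invocation of Jensen's inequality for the step $\EV{n^{2}/s}\ge n^{2}/\EV{s}$ is cleaner than the paper's presentation, which writes that step as a chain of equalities even though only the inequality (which is all the lemma claims) actually holds.
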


\begin{proof}
    See \appref{proof:lem_gamma}.
\end{proof}

Note here that the term $\sum_{i=0}^{t}A^{i}L^{-1}(A^{i})^{\top}$ is the weighted finite-time controllability gramian of \eqref{eq:system}.
\lemref{lem:gamma} shows that the quantification $\EV{\tr(\hat{\Lambda}^{-1}(T))}$ in \eqref{eq:security} can be bounded from below by using the trace of the gramian.
We refer to the bound $\gamma$ as \textit{sample identifying-complexity curve} due to the acknowledge that the curve captures the complexity of the identification of $\vec(A)$ with $|\D|=T+1$ samples.
It should be noted here that the system trajectory $x_{t}$ explicitly depends on the controller gain $F$.
Thus, the curve is a function of $T$ and $F$.
We can see the following two observations from \eqref{eq:gamma}.
\begin{itemize}
    \item Dependency of $F$: The sample identifying complexity is larger if $F$ makes the stability degree measured by the trace of the controllability gramian smaller.
          This is natural because as the system more stable, the amount of information, i.e., the system output driven by the initial state and external input $w$, can be less, thereby making the identification more difficult.
    \item Dependency of $T$: The sample identifying complexity is larger if the number of deciphered samples lesser.
          This implies that the identification is difficult for the adversary by decreasing leaked data samples.
\end{itemize}
\figref{fig:sic} depicts the schematic picture of the curve $\gamma(T,F)$.
Although $F$ is an $m$-by-$n$ matrix, in the figure larger $F$ implies the one making $A$ in \eqref{eq:system} more stabilized.

When a sample size $T$ is sufficiently large, the summation terms in \eqref{eq:est_variance} and \eqref{eq:est_mean} should be much larger than the terms $\Lambda$ and $\Lambda\mu$, respectively.
Then, the estimates $\hat{\Lambda}$ and $\hat{\mu}$ satisfy the following relation to an estimation error.

\begin{corollary}\label{cor:est_error}
    If a sample size $T$ is sufficiently large, then $\hat{\Lambda}(T)$ in \eqref{eq:est_variance} satisfies
    \begin{equation}
        \EV{\|A-\hat{A}\|_{F}^{2}}=\EV{\tr(\hat{\Lambda}^{-1}(T))},
        \label{eq:est_error}
    \end{equation}
    where $\vec(\hat{A})=\hat{\mu}(T)$, and $\hat{\mu}(T)$ is given by \eqref{eq:est_mean}.
\end{corollary}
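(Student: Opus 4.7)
The plan is to reduce the corollary to a standard linear regression identity by exploiting the large-$T$ regime, and then to control the error terms via a martingale-difference argument. Concretely, since $T$ is large, I would drop the prior contributions in \eqref{eq:est_variance}--\eqref{eq:est_mean}, so that
\begin{align*}
\hat{\Lambda}(T) &\approx \sum_{t=0}^{T-1}(x_{t}\otimes I)L(x_{t}\otimes I)^{\top}, \\
\hat{\mu}(T) &\approx \hat{\Lambda}^{-1}(T)\sum_{t=0}^{T-1}(x_{t}\otimes I)Lx_{t+1}.
\end{align*}
Using the Kronecker identity $Ax_{t}=(x_{t}^{\top}\otimes I)\vec(A)$ together with $x_{t+1}=Ax_{t}+w_{t}$, I would show that $(x_{t}\otimes I)Lx_{t+1}=(x_{t}\otimes I)L(x_{t}\otimes I)^{\top}\vec(A)+(x_{t}\otimes I)Lw_{t}$. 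Summing over $t$ and substituting into the expression for $\hat{\mu}(T)$ collapses the first term to $\hat{\Lambda}(T)\vec(A)$, yielding the clean error representation
\[
    \vec(\hat{A})-\vec(A)=\hat{\Lambda}^{-1}(T)M_{T},\qquad M_{T}\coloneqq\sum_{t=0}^{T-1}(x_{t}\otimes I)Lw_{t}.
\]

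Next I would compute $\EV{\|A-\hat{A}\|_{F}^{2}}=\EV{\tr(\hat{\Lambda}^{-1}(T)M_{T}M_{T}^{\top}\hat{\Lambda}^{-1}(T))}$. The key structural fact is that $\{(x_{t}\otimes I)Lw_{t}\}_{t}$ is a martingale difference sequence with respect to the natural filtration, because $w_{t}$ is independent of $\{x_{0},\ldots,x_{t},w_{0},\ldots,w_{t-1}\}$ and has mean zero. Hence the off-diagonal cross terms vanish, and the within-step conditional second moment evaluates to $(x_{t}\otimes I)L\,L^{-1}L(x_{t}\otimes I)^{\top}=(x_{t}\otimes I)L(x_{t}\otimes I)^{\top}$, so that $\EV{M_{T}M_{T}^{\top}}=\EV{\hat{\Lambda}(T)}$ in the same large-$T$ approximation.

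The last step is to pass from $\EV{M_{T}M_{T}^{\top}}=\EV{\hat{\Lambda}(T)}$ to $\EV{\tr(\hat{\Lambda}^{-1}M_{T}M_{T}^{\top}\hat{\Lambda}^{-1})}=\EV{\tr(\hat{\Lambda}^{-1})}$. This is where I expect the main obstacle: the sandwiching factors $\hat{\Lambda}^{-1}(T)$ depend on the same trajectory as $M_{T}$, so one cannot simply pull them out of the expectation. My plan is to invoke Schurness of $A$ (assumed in \lemref{lem:gamma}) to argue that $x_{t}$ is ergodic, so that both $\hat{\Lambda}(T)/T$ and $M_{T}M_{T}^{\top}/T$ concentrate on a common deterministic limit given by the stationary second-moment matrix of $x_{t}$ weighted by $L$. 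Under this concentration, $\hat{\Lambda}^{-1}M_{T}M_{T}^{\top}\hat{\Lambda}^{-1}$ becomes $\hat{\Lambda}^{-1}(T)$ to leading order, and taking traces and expectations gives \eqref{eq:est_error}. This is precisely the Bernstein--von Mises phenomenon: in the large-$T$ limit the Bayes posterior mean behaves like the maximum-likelihood estimator, and its frequentist mean-squared error matches the posterior trace-covariance appearing in \lemref{lem:gamma}.
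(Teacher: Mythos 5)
Your computation follows the same route as the paper's proof: drop the prior terms, derive the error representation $\vec(\hat{A})-\vec(A)=\hat{\Lambda}^{-1}(T)M_{T}$ with $M_{T}=\sum_{t=0}^{T-1}(x_{t}\otimes I)Lw_{t}$, expand the squared Frobenius norm as a sandwiched trace, kill the cross terms, and use $\EV{w_{t}w_{t}^{\top}}=L^{-1}$ so that the middle factor becomes $\hat{\Lambda}(T)$ and cancels one of the $\hat{\Lambda}^{-1}(T)$'s. The paper performs the substitution $w_{t}w_{t}^{\top}\mapsto L^{-1}$ directly inside the sandwiched expectation without comment; you correctly identify that this is the delicate step, since $\hat{\Lambda}^{-1}(T)$ is built from the same trajectory that drives $M_{T}$ and cannot simply be pulled out of the expectation. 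Flagging this is to your credit --- the paper glosses over it.

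However, your proposed repair of that step contains a genuine error. You claim that $M_{T}M_{T}^{\top}/T$ concentrates on the same deterministic limit as $\hat{\Lambda}(T)/T$. It cannot: $M_{T}$ is a single vector in $\R^{n^{2}}$, so $M_{T}M_{T}^{\top}$ is a rank-one matrix, whereas the stationary limit of $\hat{\Lambda}(T)/T$ is full rank. What the martingale CLT actually gives is $M_{T}/\sqrt{T}\Rightarrow\N(\bfzero,\bar{\Lambda})$, so $M_{T}M_{T}^{\top}/T$ converges in distribution to a random rank-one matrix $ZZ^{\top}$ whose \emph{expectation} is $\bar{\Lambda}$; it does not concentrate. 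The fix is to use concentration only where it holds: by ergodicity of the Schur-stable closed loop, $\hat{\Lambda}(T)/T\to\bar{\Lambda}$ (deterministic), so the sandwiching factors become asymptotically deterministic, and then $\EV{\tr(\hat{\Lambda}^{-1}M_{T}M_{T}^{\top}\hat{\Lambda}^{-1})}\approx T^{-1}\tr(\bar{\Lambda}^{-1}\EV{M_{T}M_{T}^{\top}/T}\,\bar{\Lambda}^{-1})=T^{-1}\tr(\bar{\Lambda}^{-1})\approx\EV{\tr(\hat{\Lambda}^{-1}(T))}$, where the expectation of $M_{T}M_{T}^{\top}$ is evaluated via your (correct) martingale-difference identity. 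With that one correction your argument closes the gap the paper leaves open, and is in fact a more rigorous version of the published proof.
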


\begin{proof}
    See \appref{proof:cor_est_error}.
\end{proof}

The equalities \eqref{eq:gamma} and \eqref{eq:est_error} show that the sample identifying complexity is a lower bound of a type of estimation error with a sufficiently large samples.
Thus, it is reasonable that the difficulty of identifying a system matrix is quantified by the sample identifying complexity.

\begin{figure*}[!t]
    \centering
    \subfigure[Sample identifying-complexity curve.]{\includegraphics[scale=1]{./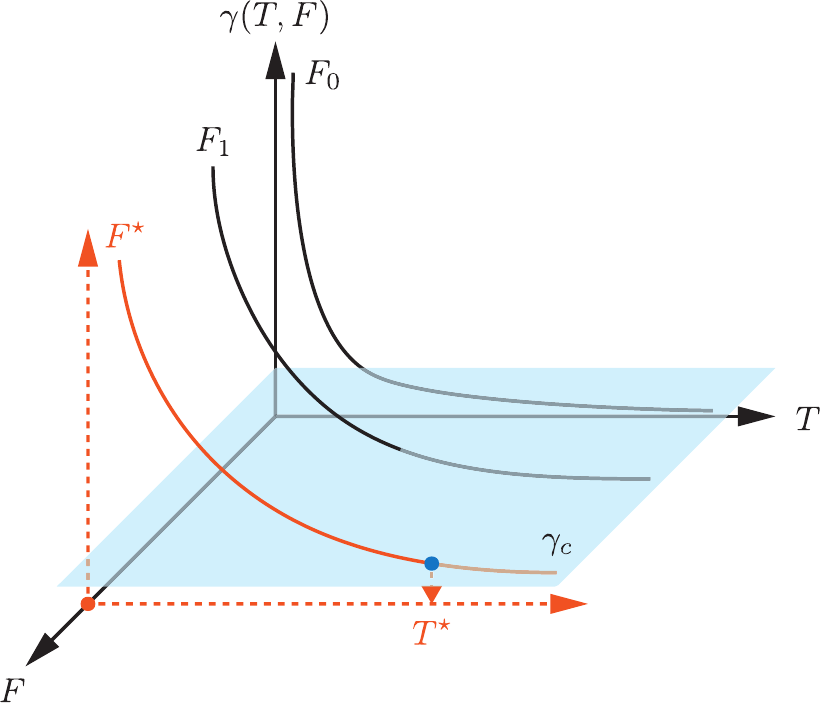}\label{fig:sic}}%
    \subfigure[Sample deciphering-time curve.]{\includegraphics[scale=1]{./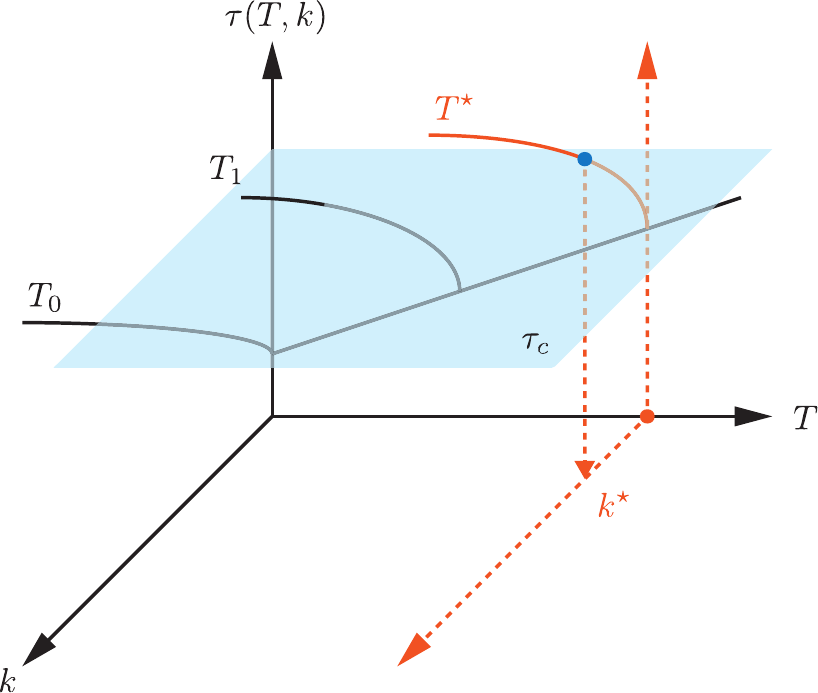}\label{fig:sdt}}
    \caption{Schematic pictures of sample identifying-complexity curve $\gamma(T,F)$ and sample deciphering-time curve $\tau(T,k)$.}
    \label{fig:curve}
\end{figure*}

For the following argument, we show a special case when $L^{-1}=\sigma^{2}I$ and the adversary has no prior information about the system, i.e.,
\begin{equation}
    \tr(\Lambda)=0.
    \label{eq:no_info}
\end{equation}
Then, the following corollary immediately follows from \lemref{lem:gamma}:

\begin{corollary}\label{cor:gamma}
    If $L^{-1}=\sigma^{2}I$ and \eqref{eq:no_info} hold, then $\gamma$ in \eqref{eq:gamma} satisfies
    \begin{equation}
        \gamma(T,F)=\cfrac{n}{\displaystyle\sum_{t=0}^{T-1}\tr\left(\sum_{i=0}^{t}A^{i}(A^{i})^{\top}\right)}.
        \label{eq:gamma_simple}
    \end{equation}
\end{corollary}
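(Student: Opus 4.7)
The plan is to obtain \eqref{eq:gamma_simple} by direct substitution into the expression \eqref{eq:gamma} established in \lemref{lem:gamma}, using only the two hypotheses $L^{-1}=\sigma^{2}I$ and $\tr(\Lambda)=0$. Since the corollary is stated as a specialization, no new probabilistic or dynamical argument should be needed; everything is a matter of simplifying the right-hand side of \eqref{eq:gamma}.

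First, I would rewrite the precision matrix of the process noise. From $L^{-1}=\sigma^{2}I$ we get $L=\sigma^{-2}I$, so $\tr(L)=n\sigma^{-2}$. Next, for every $i$ the summand inside the controllability-gramian-like term simplifies as $A^{i}L^{-1}(A^{i})^{\top}=\sigma^{2}A^{i}(A^{i})^{\top}$, and by linearity of the trace,
\begin{equation*}
\tr\!\left(\sum_{i=0}^{t}A^{i}L^{-1}(A^{i})^{\top}\right)=\sigma^{2}\,\tr\!\left(\sum_{i=0}^{t}A^{i}(A^{i})^{\top}\right).
\end{equation*}
Multiplying by $\tr(L)=n\sigma^{-2}$ then cancels the $\sigma^{2}$ cleanly, so the second term in the denominator of \eqref{eq:gamma} equals $n\sum_{t=0}^{T-1}\tr\!\big(\sum_{i=0}^{t}A^{i}(A^{i})^{\top}\big)$.

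Finally, I would invoke the hypothesis \eqref{eq:no_info}, i.e., $\tr(\Lambda)=0$, to delete the remaining additive term in the denominator, and then cancel one factor of $n$ against the $n^{2}$ in the numerator to arrive at \eqref{eq:gamma_simple}. There is no genuine obstacle here; the only minor point requiring care is that the $\sigma$-dependence vanishes entirely, which is what makes the simplified curve depend only on the closed-loop matrix $A=A_{p}+B_{p}F$ and not on the noise intensity. I would state this cancellation explicitly so that the reader sees why the simplified form in \corref{cor:gamma} is a clean functional of $(T,F)$ alone.
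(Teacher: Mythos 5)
Your proposal is correct and follows essentially the same route as the paper's own proof: substitute $\tr(L)=n\sigma^{-2}$ and $A^{i}L^{-1}(A^{i})^{\top}=\sigma^{2}A^{i}(A^{i})^{\top}$ into \eqref{eq:gamma}, drop the $\tr(\Lambda)$ term via \eqref{eq:no_info}, and cancel the $\sigma^{2}$ factors and one power of $n$. No gaps.
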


\begin{proof}
    See \appref{proof:cor_gamma}.
\end{proof}

The sample identifying-complexity curve connects the relationship between the sample complexity $\EV{\tr(\hat{\Lambda}^{-1}(T))}$ in \defref{def:security} and a pair $(T,F)$.
Before showing how this is useful for solving \probref{prob:security}, we next show a different curve that connects the security to $T$ and a key length $k$.

\begin{remark}
    We have introduced the expectation of a lower bound of $\hat{\Lambda}^{-1}$ because the computation of the inverse of $\hat{\Lambda}$, in general, requires a large number of computation resources, and it cannot be computed in advance of the control system's operation.
    A similar approach can be found in~\cite{Farokhi18,Farokhi19,Farokhi20_2,Ziemann20}, and the studies employed the inverse of a trace of the Fisher information matrix as a lower bound of the precision of general unbiased estimator for dynamical systems.
    Unfortunately, the approach is not specialized in our attack scenario, i.e., it would give a loose lower bound of $\tr(\hat{\Lambda}^{-1})$, and the lower bound cannot be computed without the system's operating data.
\end{remark}

\subsection{Sample deciphering-time curve}

In this paper, we refer to $\tau(T,k)$ in \eqref{eq:security} as \textit{sample deciphering-time curve} due to the acknowledge that the curve captures the computation time for deciphering the $T+1$ ciphertexts of $\Denc$.
One might consider that the deciphering time does not depend on the number of samples.
This is true in a traditional setup of public-key encryption, referred to as \textit{static-key encryption} in this paper, where the keys used for encrypting all samples are identical~\cite{Katz15}.
On the other hand, when the keys of individual samples are completely different, in other words, \textit{dynamic-key encryption} is used~\cite{Teranishi20_5}, the deciphering-time clearly depends on the number of samples.
We show an explicit representation of $\tau(T,k)$ for each encryption scheme, and show an advantage of the dynamic case in terms of the security in \defref{def:security}.

\subsubsection{Static-key case}

As a multiplicative homomorphic encryption scheme, this study uses the ElGamal encryption $\E$ described in \secref{sec:preliminaries}.
The security of $\E$, i.e., the difficulty of breaking the encryption, is based on the hardness of the discrete logarithm problem for $\G$ that is defined as follows:

\begin{definition}[discrete logarithm problem~\cite{Hoffstein08}]
    Let $G$ be a group with a binary operation~$\circ$.
    The discrete logarithm problem (DLP) for $G$ is to determine, for any given elements $g,h\in G$, an integer $x$ satisfying
    \[
        g^{x}=\underbrace{g\circ g\circ \cdots \circ g}_{x\text{ times}}=h.
    \]
    Additionally, the assumption that there does not exist a polynomial-time algorithm to solve the DLP is called the discrete logarithm assumption.
\end{definition}

In the field of cryptography, the discrete logarithm assumption is widely believed to be satisfied.
The ElGamal encryption achieves indistinguishability against chosen-plaintext attacks (IND-CPA) under the decisional Diffie-Hellman (DDH) assumption~\cite{Katz15} that is a variant of the discrete logarithm assumption.
The security level of IND-CPA means that an adversary can obtain no information about plaintexts from ciphertexts.
Hence, an adversary must solve the DLP for $\G$ to obtain $\D$ from $\Denc$.
The majority of algorithms for solving the DLP for a finite field $\F_{\eta}$ with a modulus $\eta$ are subexponential-time algorithms of which computation time is described as
\begin{equation}
    L_{v,d}(\eta)=\exp{\{d(\ln{\eta})^{v}(\ln{\ln{\eta}})^{1-v}\}},
    \label{eq:subexp_time}
\end{equation}
where $v$ and $d$ are algorithm parameters~\cite{Hoffstein08}.
For instance, the general number field sieve, the known fastest classical subexponential-time algorithm, has $v=1/3$ and $d=(64/9)^{1/3}$ in \eqref{eq:subexp_time}~\cite{Bernstein93}.
Thus, we use
\begin{equation}
    L(k)\coloneqq L_{1/3,(64/9)^{1/3}}(2^{k})
    \label{eq:subexp_time_key}
\end{equation}
as the computation time for deciphering a ciphertext of $\E$ with a key length $k$ in the following.
Note that $\G\subset\F_{p}$, and $L(k)$ satisfies $L(k)\le L_{1/3,(64/9)^{1/3}}(p)$ since $p\in(2^{k},2^{k+1})$.
Therefore, \eqref{eq:subexp_time_key} is stricter with a defender than \eqref{eq:subexp_time}.

We next show the sample deciphering time of the static-key encryption.
Since a single key pair is used for encrypting all the samples throughout a life span of the encrypted control system, the adversary has to break only one ciphertext for finding the secret key.
Once the secret key is found, he/she can decrypt all ciphertexts of $\Denc$ immediately.
Thus, the sample deciphering time in this case can be described as
\begin{equation}
    \tau(0,k)=\cfrac{L(k)}{\Upsilon},
    \label{eq:tau_0}
\end{equation}
where $\Upsilon$ is defined in \defref{def:adversary}.
For satisfying the second inequality of \eqref{eq:security}, the key length $k$ will be long because even only one ciphertext cannot be broken during a given period $\tau_{c}$.
Although it is natural from the ordinary manner in cryptography, the online computation costs of the associated $\Enc$ and $\Dec$ in \figref{fig:scenario} must be heavy, which is not desirable for real-time controls.

\begin{remark}
    The number field sieve is used for solving not only the DLP but also the prime factorization problem.
    Thus, \eqref{eq:tau_0} also enables to estimate the computation times for breaking other encryption schemes, such as RSA~\cite{Rivest78} and Paillier encryption~\cite{Paillier99}.
    Moreover, we can change $L(k)$ according to any given encryption scheme such as the LLL-algorithm for lattice and fully homomorphic encryption, and, therefore, the sample deciphering time can be obtained for any encryption scheme as well.
\end{remark}

\subsubsection{Dynamic-key case}

One way to reduce the online computational costs of $\Enc$ and $\Dec$ while keeping the sample deciphering time long is to regenerate a secret key at each sampling time.
However, this approach is not suitable for real-time controls due to the high computational costs.
As an alternative approach, we employ the dynamic-key encryption~\cite{Teranishi20_5} that is an augmented concept of public-key encryption.
The overview is as follows:
First, give a key pair by $\Gen$ at the initial time.
The secret key at time $t+1$ is computed by a simple updating rule based on a modulus operation with a random number and the secret key at time $t$.
At the same time, a public key and ciphertexts of controller parameters are also updated to keep the correctness, i.e., the property that a ciphertext is decrypted correctly, with the new secret key.
Due to the time-dependency of this dynamic-key encryption, the adversary would have to break $T+1$ ciphertexts to collect $\D$ from $\Denc$.
However, the security proof of the dynamic-key encryption has not yet been shown.
Additionally, the dynamic-key encryption refreshes only the second element of ciphertext, and so, the first element remains the same value.
In the following, we extend the dynamic-key encryption in~\cite{Teranishi20_5} to update all components of ciphertext and provide the security proof of the scheme.

The dynamic ElGamal encryption in this study is constructed as follows:

\begin{definition}\label{def:dynenc}
    Dynamic ElGamal encryption is a tuple $\Edyn\coloneqq(\Gen,\Enc,\Dec,T_{\K},T_{\C})$ with the transition maps
    \begin{align*}
        T_{\K}&:((p,q,g,h),s)\mapsto((p,q,g,hg^{s'}\bmod p),s\!+\!s'\bmod q), \\
        T_{\C}&:(c_{1},c_{2})\mapsto(c_{1}g^{r'}\bmod p,(c_{1}g^{r'})^{s'}c_{2}h^{r'}\bmod p),
    \end{align*}
    where $r',s'\sim\U(\Z_{q})$.
\end{definition}

\begin{remark}
    The random number $s'$ needs to be shared secretly between a sensor an an actuator if they are installed on different places.
    This can be achieved by using a standard symmetric-key encryption scheme, such as AES.
    Similarly, a plant can transmit $r'$ and $s'$ to an encrypted controller secretly.
\end{remark}

In \defref{def:dynenc}, $T_{\K}$ and $T_{\C}$ imply updating rules for a key pair and ciphertext, respectively.
$T_{\C}$ of the dynamic ElGamal encryption updates both $c_{1}$ and $c_{2}$ unlike to the scheme in~\cite{Teranishi20_5}.
We first show that the correctness and multiplicative homomorphism of our encryption scheme are satisfied even though the transition map is modified.

\begin{proposition}\label{prop:correctness}
    Let $k$ be a key length, $(\pk_{0},\sk_{0})=\Gen(k)$, and $c_{0}=\Enc(\pk_{0},m)$.
    If $(\pk_{t+1},\sk_{t+1})=T_{\K}(\pk_{t},\sk_{t})$ and $c_{t+1}=T_{\C}(c_{t})$, then
    \[
        \Dec(\sk_{t},c_{t})=\Dec(\sk_{t},\Enc(\pk_{t},m))=m\bmod p
    \]
    for all $m\in\M$ and $t\in\Z^{+}$.
    Furthermore, the multiplicative homomorphism
    \[
        \Dec(\sk_{t},c_{t}\ast\Enc(\pk_{t},m')\bmod p)=mm'\bmod p
    \]
    is satisfied for all $m,m'\in\M$ and $t\in\Z^{+}$.
\end{proposition}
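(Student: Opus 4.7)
The proof will proceed by induction on the time index $t$, with the induction hypothesis being that $c_t$ has exactly the algebraic form of a fresh ElGamal ciphertext under the current key pair $(\pk_t,\sk_t)$, i.e. there exists some (implicit) random exponent $R_t \in \Z_q$ such that
\[
c_t = (g^{R_t}\bmod p,\ m\, h_t^{R_t}\bmod p),\qquad h_t = g^{s_t}\bmod p.
\]
The base case $t=0$ is immediate from the definitions of $\Gen$ and $\Enc$, with $R_0 = r$.

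For the inductive step, I would first verify that $T_{\K}$ preserves the key-consistency relation $h = g^{s}\bmod p$: applying $T_{\K}$ gives $h_{t+1} = h_t g^{s'} = g^{s_t + s'} = g^{s_{t+1}} \bmod p$, so $(\pk_{t+1},\sk_{t+1})$ is again a syntactically valid ElGamal key pair. Then I would substitute the inductive form of $c_t$ into $T_{\C}$ and simplify: the first component becomes $g^{R_t+r'}\bmod p$, and the second becomes
\[
(g^{R_t+r'})^{s'}\cdot m\, h_t^{R_t}\cdot h_t^{r'}
= m\cdot (g^{s'} h_t)^{R_t+r'}
= m\, h_{t+1}^{R_t+r'}\pmod p.
\]
Hence $c_{t+1}$ is again a fresh-looking ciphertext under $\pk_{t+1}$ with randomness $R_{t+1}=R_t+r'\bmod q$, closing the induction. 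The first correctness equality $\Dec(\sk_t,c_t)=m\bmod p$ then follows by applying the standard ElGamal decryption identity $(g^{R_t})^{-s_t}\cdot m h_t^{R_t} = m\bmod p$, and the second equality $\Dec(\sk_t,\Enc(\pk_t,m))=m\bmod p$ is just base-scheme correctness using that $(\pk_t,\sk_t)$ is a valid pair.

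For multiplicative homomorphism, once the induction above is in place the argument reduces to the classical ElGamal case: both $c_t$ and $\Enc(\pk_t,m')$ are valid ciphertexts of $m$ and $m'$ under the same $\pk_t$, so the Hadamard product has first coordinate $g^{R_t+\tilde r}$ and second coordinate $m m' h_t^{R_t+\tilde r}$ modulo $p$, whose decryption under $\sk_t$ yields $mm'\bmod p$.

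The only mildly subtle point — and the one I would be careful to write out explicitly — is the algebraic manipulation showing that the extra factor $(c_1 g^{r'})^{s'}$ introduced in the new $T_{\C}$ combines with the old $h_t^{R_t}$ and the fresh $h_t^{r'}$ into the single clean power $h_{t+1}^{R_t+r'}$; this is exactly what repairs the asymmetry of the original scheme of \cite{Teranishi20_5}, in which only the second ciphertext component was refreshed. No deeper obstacle is expected, since once the invariant ``$c_t$ is distributed as a valid encryption of $m$ under $\pk_t$'' is maintained, correctness and homomorphism are inherited from plain ElGamal.
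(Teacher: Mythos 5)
Your proof is correct and follows essentially the same route as the paper's: both verify correctness by expanding $T_{\K}$ and $T_{\C}$ and cancelling the $g^{\pm(R_t+r')(s_t+s')}$ factors, and both reduce the homomorphism to the standard ElGamal identity. The only difference is presentational — you make the induction hypothesis (``$c_t$ is a valid fresh ciphertext of $m$ under $\pk_t$'') explicit and thereby also get $\Dec(\sk_t,\Enc(\pk_t,m))=m\bmod p$ for free, whereas the paper performs a single generic-step calculation and cites Theorem~1 of \cite{Teranishi20_5} for that part.
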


\begin{proof}
    See \appref{proof:correctness}.
\end{proof}

Due to the homomorphism, the dynamics of the encrypted control system in \figref{fig:scenario} with the dynamic-key encryption scheme $\Edyn$ can be regarded as \eqref{eq:system} while the key pair and ciphertexts are dynamically updated. 

We next show an explicit representation of the sample deciphering-time curve $\tau(T,k)$ when $\Edyn$ is used.
To this end, we show a cryptographic property of the transition maps $T_{\K}$ and $T_{\C}$.

\begin{proposition}\label{prop:negligible}
    Let $k$ be a key length, $(\pk_{0},\sk_{0})=\Gen(k)$, $m\in\M$, and $c_{0}=\Enc(\pk_{0},m)$.
    A key pair and ciphertext are updated by $(\pk_{t+1},\sk_{t+1})=T_{\K}(\pk_{t},\sk_{t})$ and $c_{t+1}=T_{\C}(c_{t})$, respectively.
    Suppose an adversary knows $\pk_{t}$, $\sk_{t}$, and $c_{t}$ and can solve the DLP for $\G$.
    There exists a negligible function $\epsilon(k)$ such that
    \[
        \Pr(\hat{\sk}_{t+1}=\sk_{t+1})<\epsilon(k)\land\Pr(\hat{\sk}_{t-1}=\sk_{t-1})<\epsilon(k),
    \]
    for all $t\ge1$, where $\hat{\sk}_{t+1}$ and $\hat{\sk}_{t-1}$ are adversary's estimations of $\sk_{t+1}$ and $\sk_{t-1}$, respectively.
\end{proposition}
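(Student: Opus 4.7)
The strategy is to argue information-theoretically that each of $\sk_{t+1}$ and $\sk_{t-1}$ remains uniform on $\Z_q$ when conditioned on the adversary's view $(\pk_t,\sk_t,c_t)$. Once that is established, any guessing strategy succeeds with probability exactly $1/q$, and since $q$ is a $k$-bit prime with $q\ge 2^{k-1}$, the bound is negligible in $k$ regardless of how much DLP-solving power the adversary has; the oracle matters only for extracting information that is actually present in the view, and our goal is to show there is none.

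For the forward direction I would simply note that $\sk_{t+1}=\sk_t+s'\bmod q$, where $s'\sim\U(\Z_q)$ is freshly sampled at the $t\to t+1$ update, strictly after $(\pk_t,\sk_t,c_t)$ are produced. Hence $s'$ is statistically independent of the view; conditioning on the view preserves its uniformity, and $\sk_{t+1}$ inherits this uniformity through the deterministic shift. No algebraic relation among $\pk_t$, $\sk_t$, $c_t$ exposes $s'$, so the DLP capability contributes nothing in this direction.

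The backward direction is where the real work lies, because the view $c_t$ was itself produced by applying $T_{\C}$ with the same secret $s''$ that satisfies $\sk_t=\sk_{t-1}+s''$. The key step I plan to prove, by induction on $t$, is a re-randomization identity: $c_t$ is distributed exactly as a fresh $\Enc(\pk_t,m)$ with $m=\Dec(\sk_t,c_t)$. Concretely, if $c_{t-1}=(g^{R_{t-1}}\bmod p,\,m\, h_{t-1}^{R_{t-1}}\bmod p)$ with $R_{t-1}\sim\U(\Z_q)$, then a direct computation using $h_t=h_{t-1}g^{s''}$ shows $T_{\C}(c_{t-1})=(g^{R_t}\bmod p,\,m\, h_t^{R_t}\bmod p)$ with $R_t=R_{t-1}+r''\bmod q$. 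Since $r''\sim\U(\Z_q)$ is fresh and independent of $s''$, so is $R_t$, and therefore $c_t$ conditioned on $(\pk_t,\sk_t,m)$ carries no information about $s''$. Combined with the inductive fact that $\sk_{t-1}$ is marginally uniform on $\Z_q$, which makes $s''=\sk_t-\sk_{t-1}$ uniform given $\sk_t$ alone, this yields uniformity of $s''$, and hence of $\sk_{t-1}$, given the full view.

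Putting the two halves together, each success probability is bounded by $1/q\le 2\cdot 2^{-k}$, so $\epsilon(k)\coloneqq 2\cdot 2^{-k}$ is negligible in the sense of the paper's Definition 2 and dominates both probabilities simultaneously, establishing the conjunction. The principal obstacle is the backward re-randomization lemma; the rest is careful bookkeeping of the fresh uniform randomness injected by $T_{\K}$ and $T_{\C}$.
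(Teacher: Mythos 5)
Your proposal is correct and takes essentially the same route as the paper: both arguments are information-theoretic, showing that the view $(\pk_t,\sk_t,c_t)$ is independent of the fresh update randomness so that the optimal attack is uniform guessing over $\Z_q$, and both conclude with the negligible bound $q^{-1}<2^{-(k-1)}$. Your backward-direction re-randomization identity $c_t=(g^{R_t},\,m\,h_t^{R_t})$ with $R_t=R_{t-1}+r'$ is simply a cleaner, more explicit packaging of the paper's observation that $c_{2,t}=mg^{(r_{t-1}+r_{t-1}')(s_{t-1}+s_{t-1}')}$ depends only on the sums and hence leaks nothing about $s_{t-1}'$.
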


\begin{proof}
    See \appref{proof:negligible}.
\end{proof}

\propref{prop:negligible} implies that if we use the dynamic ElGamal cryptosystem, probability that an adversary can obtain the secret keys at time $t+1$ and $t-1$ is negligibly small even though he/she knows all information at time $t$ including the information given by solving the DLP for $\G$ as long as the updates of a key pair and ciphertexts are performed secretly.
This fact derives the following proposition on the security of our encryption scheme.

\begin{proposition}\label{prop:IND-CPA}
    $\Edyn$ satisfies IND-CPA at time $t$ under the DDH assumption even though an adversary knows $\{\pk_{i}\}_{i=0}^{t-1}$ and $\{\sk_{i}\}_{i=0}^{t-1}$.
\end{proposition}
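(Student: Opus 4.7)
The plan is to reduce the IND-CPA security of $\Edyn$ at time $t$ to the IND-CPA security of the static ElGamal scheme, which is known to hold under the DDH assumption. The core observation is that the update rule $\sk_{t}=\sk_{t-1}+s'_{t}\bmod q$ uses a freshly sampled $s'_{t}\sim\U(\Z_{q})$ that is never revealed to the adversary, so when marginalised over $s'_{t}$ the conditional distribution of $\sk_{t}$ given $\{\sk_{i}\}_{i=0}^{t-1}$ is uniform on $\Z_{q}$. Consequently $(\pk_{t},\sk_{t})$ is distributed identically to a fresh key pair produced by $\Gen(k)$, even conditioned on the entire chain of past keys.

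First I would make this independence claim precise at both the statistical and the computational level. Statistically, the sum $\sk_{0}+s'_{1}+\cdots+s'_{t}\bmod q$ is uniform on $\Z_{q}$ as soon as any one summand is uniform and independent of the others; computationally, \propref{prop:negligible} already certifies that even an adversary knowing $(\pk_{t-1},\sk_{t-1})$ and able to solve the DLP for $\G$ cannot predict $\sk_{t}$ with better than negligible probability, which a fortiori rules out any exploitable leakage under the weaker view assumed here. Hence from the adversary's standpoint the challenge key pair at time $t$ is information-theoretically indistinguishable from a freshly generated ElGamal pair.

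Next I would construct the reduction. Suppose an adversary $\mathcal{A}$ breaks IND-CPA of $\Edyn$ at time $t$ with non-negligible advantage. Build $\mathcal{B}$ against standard ElGamal as follows: on input a challenge public key $\pk^{*}$, the reduction samples $\sk_{0}\sim\U(\Z_{q})$ and $s'_{1},\ldots,s'_{t-1}\sim\U(\Z_{q})$, computes $(\pk_{i},\sk_{i})$ for $i=0,\ldots,t-1$ via $T_{\K}$, sets $\pk_{t}:=\pk^{*}$, and hands $(\{\pk_{i},\sk_{i}\}_{i=0}^{t-1},\pk_{t})$ to $\mathcal{A}$. When $\mathcal{A}$ outputs challenge messages $(m_{0},m_{1})$, $\mathcal{B}$ forwards them to its own challenger, receives $c^{*}=\Enc(\pk^{*},m_{b})$, relays $c^{*}$ to $\mathcal{A}$ as the dynamic challenge ciphertext, and outputs $\mathcal{A}$'s guess. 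By the independence argument the simulated view is distributed exactly as in the real dynamic IND-CPA experiment, so $\mathcal{B}$ inherits $\mathcal{A}$'s advantage, contradicting the standard ElGamal IND-CPA guarantee under DDH.

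The main obstacle is verifying that the chain of updates $(\pk_{0},\sk_{0})\to\cdots\to(\pk_{t-1},\sk_{t-1})$ leaks no auxiliary correlation linking the past keys to $(\pk_{t},\sk_{t})$ beyond what a fresh $\Gen(k)$ would produce. This is precisely where \propref{prop:negligible} does the heavy lifting, closing the forward direction of the Markov chain. A minor side issue is any previously observed ciphertexts $\{c_{i}\}_{i<t}$: since $\mathcal{A}$ already holds the corresponding $\sk_{i}$, these contribute no extra information beyond the underlying plaintexts and can be simulated by $\mathcal{B}$ internally via $\Enc(\pk_{i},\cdot)$. Combining these pieces delivers the claimed IND-CPA security of $\Edyn$ at time $t$ under the DDH assumption.
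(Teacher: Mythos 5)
Your proposal is correct and takes essentially the same route as the paper: both arguments rest on the fact that, because $s'$ is fresh and uniform, the time-$t$ key pair conditioned on $\{\pk_{i},\sk_{i}\}_{i=0}^{t-1}$ is distributed exactly as a fresh output of $\Gen(k)$ (the content of \propref{prop:negligible}), and both then fall back on the IND-CPA security of static ElGamal under the DDH assumption. The only difference is presentational — the paper phrases this as a sequence of game hops ($\mathsf{IND\mathchar`-CPA_{dyn}^{0}}\to\mathsf{IND\mathchar`-CPA_{dyn}^{1}}\to\mathsf{IND\mathchar`-CPA_{dyn}^{2}}\to\mathsf{IND\mathchar`-CPA}$), replacing $T_{\K}$ and $T_{\C}$ by fresh $\Gen$ and $\Enc$ calls, whereas you package the same replacement as an explicit reduction $\mathcal{B}$ that embeds the static challenge key at time $t$.
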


\begin{proof}
    See \appref{proof:IND-CPA}.
\end{proof}

From \propsref{prop:negligible}{prop:IND-CPA}, an adversary cannot obtain any information about a secret key and plaintext for all time $t$ even though he/she has secret keys at time $t-1$ and $t+1$.
Thus, he/she must solve the DLP for $\G$ $T+1$ times to collect $\D$ from $\Denc$.
Therefore, the computation time for deciphering ciphertexts of $\Denc$ is linearly increased from $\tau(0,k)$ as a sample size of $\Denc$ increases if $\Edyn$ is used.
Thus, the following lemma is derived:

\begin{lemma}\label{lem:tau}
    A sample deciphering-time curve of the encrypted control system in \figref{fig:scenario} with $\Edyn$ in \defref{def:dynenc} is given as
    \begin{equation}
        \tau(T,k)=\cfrac{(T+1)L(k)}{\Upsilon},
        \label{eq:tau}
    \end{equation}
    where $L(k)$ and $\Upsilon$ are defined in \eqref{eq:subexp_time_key} and \defref{def:adversary}, respectively.
\end{lemma}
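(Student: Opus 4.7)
The plan is to reduce the sample deciphering time to $T+1$ \emph{independent} discrete-logarithm computations and then convert the resulting operation count into wall-clock seconds via $\Upsilon$. The static-key case \eqref{eq:tau_0} already gives the per-ciphertext cost; the whole content of the lemma is therefore to justify that under $\Edyn$ the adversary cannot amortize the work across samples, and must pay $L(k)/\Upsilon$ seconds for each of the $T+1$ elements of $\Denc$.

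First I would argue that, in order to populate $\D$ from $\Denc$, the adversary must recover each plaintext $x_t$ from the ciphertext $c_t$ produced under the key pair $(\pk_t,\sk_t)$. By \propref{prop:IND-CPA}, $\Edyn$ satisfies IND-CPA at each time $t$ under the DDH assumption, and this guarantee persists even when the entire past $\{\pk_i,\sk_i\}_{i=0}^{t-1}$ is exposed; consequently no information about $x_t$ leaks from $c_t$ without first obtaining $\sk_t$, and the only classical attack available is to solve the DLP for $\G$ with modulus $p\in(2^k,2^{k+1})$.

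Next I would rule out that a single solved DLP instance can be leveraged into recovery of $\sk_{t'}$ for any $t'\ne t$. This is precisely the content of \propref{prop:negligible}: even with full knowledge of $(\pk_t,\sk_t,c_t)$ and of the DLP oracle, the probability of guessing $\sk_{t+1}$ (and, symmetrically, $\sk_{t-1}$) is bounded above by a negligible function $\epsilon(k)$. Iterating forward and backward along the time axis and taking a union bound (still negligible in $k$) shows that every $\sk_t$ must be recovered de novo, so the adversary has no choice but to perform $T+1$ separate DLP computations.

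Finally I would combine the cost per instance with the per-second throughput. By \eqref{eq:subexp_time_key}, the fastest known classical subexponential-time algorithm requires $L(k)$ floating-point operations for a single DLP over $\G\subset\F_p$, and dividing by the adversary's compute power $\Upsilon$~FLOPS yields $L(k)/\Upsilon$ seconds per ciphertext, coinciding with $\tau(0,k)$ in \eqref{eq:tau_0}. Summing over the $T+1$ independent instances and using the fact that the updates $T_\K,T_\C$ themselves have negligible cost compared to $L(k)$ delivers $\tau(T,k)=(T+1)L(k)/\Upsilon$, as claimed. The only genuine obstacle is the cross-time independence, but this has already been discharged by \propsref{prop:negligible}{prop:IND-CPA}; the rest is bookkeeping.
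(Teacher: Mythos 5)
Your proposal is correct and follows essentially the same route as the paper: the lemma is justified there (in the text immediately preceding it, without a separate proof environment) by invoking \propsref{prop:negligible}{prop:IND-CPA} to conclude that the adversary must solve the DLP for $\G$ independently $T+1$ times, each instance costing $L(k)/\Upsilon$ seconds as in \eqref{eq:tau_0}. Your added remarks on the union bound over time steps and the negligible cost of $T_{\K},T_{\C}$ are harmless elaborations of the same argument.
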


Notice that the sample deciphering time of the static-key case corresponds to \eqref{eq:tau} with $T=0$. 
It should be noted here that the curve $\tau(T,k)$ monotonically increases as either of $T$ and $k$ increases.
A schematic picture of the sample deciphering-time curve is shown in \figref{fig:sdt}. 

In conclusion, for the encrypted control system in \figref{fig:scenario} with $\Edyn$ in \defref{def:dynenc}, we have introduced the two curves: 
\begin{itemize}
    \item $\gamma(T,F)$ in \eqref{eq:gamma} that characterizes the difficulty of identifying the system \eqref{eq:system}, and
    \item $\tau(T,k)$ in \eqref{eq:tau} that quantifies the difficulty of deciphering encrypted samples.
\end{itemize}
In the next section, we show how these two curves are useful for solving \probref{prob:security}.

\section{Optimal Key Length and Controller Design}\label{sec:optimal_design}

For simplifying the following discussion, we suppose that the assumptions in \corref{cor:gamma} hold.
From \defref{def:security} and \corref{cor:gamma}, the following immediately follows: Given $k$ and $F$, if there does not exist $T$ satisfying
\begin{equation}
    \gamma(T,F)<\gamma_{c}\land\tau(T,k)\le\tau_{c},
    \label{eq:opt_security}
\end{equation}
where $\gamma$ and $\tau$ are respectively in \eqref{eq:gamma_simple} and \eqref{eq:tau}, then the encrypted control system in \figref{fig:scenario} with $\Edyn$ in \defref{def:dynenc} and $(k,F)$ is secure.
An idea for designing a key length and controller based on the sample identifying-complexity curve $\gamma$ and sample deciphering-time curve $\tau$ is as follows:

\begin{itemize}
    \item Controller design: Note from \eqref{eq:gamma_simple} that the identification variance monotonically decreases as the number of samples increases because the finite-time controllability gramian
          \begin{equation}
              W_{t}\coloneqq\sum_{i=0}^{t}A^{i}(A^{i})^{\top}
              \label{eq:gramian}
          \end{equation}
          is positive definite.
          Thus, we should design the controller $F^{\star}$ that maximizes the minimum time step $T^{\star}$ satisfying $\gamma(T^{\star},F^{\star})<\gamma_{c}$.
    \item Key length design: The computation time for deciphering $|\Denc|=T^{\star}+1$ ciphertexts is $\tau(T^{\star},k)$, and the time monotonically increases in a key length $k$.
          Considering that a key length is desirable to be as small as possible from the perspective of computational costs, it should be designed as the minimum key length $k^{\star}$ satisfying $\tau(T^{\star},k^{\star})>\tau_{c}$.
\end{itemize}

The pair $(k^{\star},F^{\star})$ is a solution to \probref{prob:security} since there does not exists $T$ satisfying \eqref{eq:opt_security} with $(k^{\star},F^{\star})$.
Note here that the controller $F^{\star}$ simultaneously minimizes the trace of $W_{t}$ in \eqref{eq:gramian} of \eqref{eq:system}.
Hence, the controller also improves the stability of the control system, which will be discussed later.
In the following, the concrete design processes of $k^{\star}$ and $F^{\star}$ are described.

\subsection{Controller design}

Following the controller design step, we design $F^{\star}$ so that the minimum time step $T^{\star}$ satisfying the first inequality of \eqref{eq:opt_security} is as large as possible.
From \eqref{eq:gamma_norm}, this design can be solved by making the cost function
\begin{equation}
    J_{T}\coloneqq\EV{\sum_{t=0}^{T-1}\|x_{t}\|^{2}}
    \label{eq:cost}
\end{equation}
as small as possible.
Since this is a finite-horizon stochastic linear quadratic regulator (s-LQR) design problem, an optimal solution is given as follows:

\begin{lemma}\label{lem:opt_input}
    Consider the system \eqref{eq:system} and $J_{T}$ in \eqref{eq:cost}.
    Assume that $B_{p}$ is full column rank.
    Then, the control sequence
    \begin{equation}
        u_{t}=-(B_{p}^{\top}P_{t+1}B_{p})^{-1}B_{p}^{\top}P_{t+1}A_{p}x_{t},\quad t\in[0,T)
        \label{eq:opt_input}
    \end{equation}
    minimizes $J_{T}$, where
    \[
        P_{t}\!=\!A_{p}^{\top}P_{t+1}A_{p}\!-\!A_{p}^{\top}P_{t+1}B_{p}(B_{p}^{\top}P_{t+1}B_{p})^{-1}B_{p}^{\top}P_{t+1}A_{p}\!+\!I,
    \]
    and $P_{T}=I$.
\end{lemma}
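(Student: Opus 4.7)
The plan is to apply backward dynamic programming with a quadratic value-function ansatz, which is the standard route for finite-horizon stochastic LQR problems. I would define the cost-to-go $V_t(x)\coloneqq\min_{u_t,\dots,u_{T-1}}\EV{\sum_{s\ge t}\|x_s\|^2\mid x_t=x}$ and postulate $V_t(x)=x^\top P_t x + c_t$ for some symmetric $P_t$ and scalar $c_t$; the boundary value $V_T(x)=x^\top x$ is then consistent with the stated $P_T=I$.

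For the inductive step I would write Bellman's equation
$V_t(x)=\|x\|^2+\min_{u_t}\EV{V_{t+1}(A_p x + B_p u_t + w_t)}$,
expand using $V_{t+1}(y)=y^\top P_{t+1}y+c_{t+1}$ together with $\EV{w_t}=\bfzero$ and $\EV{w_t w_t^\top}=L^{-1}$, and observe that the inner minimand becomes a strictly convex quadratic in $u_t$ whose Hessian is $2B_p^\top P_{t+1}B_p$. Setting its gradient to zero yields precisely the closed-form $u_t^\star$ stated in the lemma, and substituting $u_t^\star$ back and collecting terms produces the claimed Riccati recursion for $P_t$; the accompanying scalar update $c_t=c_{t+1}+\tr(P_{t+1}L^{-1})$ is independent of $x$ and $u$, so it plays no role in determining the minimizer.

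The main obstacle is ensuring that $B_p^\top P_{t+1}B_p$ is invertible at every step, which is exactly why the full-column-rank assumption on $B_p$ is imposed. I would discharge this by running a parallel induction to show $P_t\succ 0$ for all $t\in[0,T]$. The base case is immediate from $P_T=I$, and for the inductive step one rewrites the Riccati update in the symmetric form $P_t=A_p^\top\bigl[P_{t+1}^{1/2}(I-\Pi_{t+1})P_{t+1}^{1/2}\bigr]A_p+I$, where $\Pi_{t+1}$ is the orthogonal projector onto the range of $P_{t+1}^{1/2}B_p$: the bracketed term is positive semidefinite and the additive $I$ secures strict positivity. Full column rank of $B_p$ combined with $P_{t+1}\succ 0$ then gives $B_p^\top P_{t+1}B_p\succ 0$, closing the induction. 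Finally, optimality of the full sequence $\{u_t^\star\}_{t=0}^{T-1}$ follows from the standard verification theorem for finite-horizon stochastic dynamic programs: the attained cost is $\EV{x_0^\top P_0 x_0}+c_0$, and any deviating policy strictly increases the inner quadratic at some stage.
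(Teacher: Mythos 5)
Your proof is correct and follows the same overall skeleton as the paper's: recognize the problem as a standard finite-horizon stochastic LQR, obtain the feedback law and Riccati recursion, and then run a backward induction establishing $P_t\succ 0$ so that $B_p^\top P_{t+1}B_p$ is invertible at every stage. The one place where you genuinely diverge is the inductive step: the paper argues via a Schur-complement test on the block matrix $\bigl[\begin{smallmatrix}P & PB_p\\ B_p^\top P & B_p^\top P B_p\end{smallmatrix}\bigr]$ (together with a hat-matrix eigenvalue count for the base case $P_{T-1}$), whereas you factor $P-PB_p(B_p^\top PB_p)^{-1}B_p^\top P=P^{1/2}(I-\Pi)P^{1/2}$ with $\Pi$ the orthogonal projector onto the range of $P^{1/2}B_p$. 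Your factorization is arguably cleaner: it makes positive semidefiniteness of the bracketed term immediate and sidesteps the paper's slightly imprecise claim that the block matrix is positive \emph{definite} (it is only positive semidefinite, which is all the Schur-complement argument actually needs). You also spell out the dynamic-programming derivation and the noise-dependent constant $c_t=c_{t+1}+\tr(P_{t+1}L^{-1})$ explicitly, where the paper simply cites a textbook; both are legitimate, the paper's being shorter and yours being self-contained. One small caveat, inherited from the lemma rather than introduced by you: the boundary condition $P_T=I$ corresponds to a terminal cost $\|x_T\|^2$ that is absent from the stated $J_T=\EV{\sum_{t=0}^{T-1}\|x_t\|^2}$; your ansatz $V_T(x)=x^\top x$ is consistent with the lemma as written, but it is worth noting that with the literal cost $J_T$ the last input $u_{T-1}$ is irrelevant and $V_T\equiv 0$ would be the exact boundary condition.
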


\begin{proof}
    See \appref{proof:opt_input}.
\end{proof}

Although the control \eqref{eq:opt_input} is optimal, the resultant controller has to be time-varying.
Unfortunately, time-varying controllers are difficult to be used in the encrypted-control framework because controller parameters must be encrypted and stored in advance before controller operation due to the difficulty of encrypted controller parameters update.
On the other hand, as $T\rightarrow\infty$, the control law converges to $u_{t}=F^{\star}x_{t}$ with
\begin{equation}
    F^{\star}=-(B_{p}^{\top}PB_{p})^{-1}B_{p}^{\top}PA_{p},
    \label{eq:F_star}
\end{equation}
where $P>0$ is the solution to the discrete-time algebraic Riccati equation
\[
    P=A_{p}^{\top}PA_{p}-A_{p}^{\top}PB_{p}(B_{p}^{\top}PB_{p})^{-1}B_{p}^{\top}PA_{p}+I.
\]
Hence, as a suboptimal solution to make $J_{T}$ as small as possible, we use the static feedback gain $F^{\star}$ in \eqref{eq:controller}.
It is interesting that the standard stochastic cheap control \eqref{eq:opt_input} is a good solution from the perspective of the security.
This fact clearly connects the notion of the security and classical control theory.
Moreover, the fact means no trade-off between the security and the control performance exists in controller design under the adversary of \defref{def:adversary}.
In other words, whenever the defender wants $F^{\star}$ in \eqref{eq:F_star} for improving closed-loop damping performance, the controller is also a good solution in terms of the security.

Once $F^{\star}$ is designed, the minimum time step $T^{\star}$ satisfying the first inequality of \eqref{eq:opt_security} can be uniquely determined as follows:
\begin{align}
    &T^{\star}=\argmin_{T}E(T),\quad E(T)\coloneqq\sum_{t=0}^{T-1}\tr(W_{t}) \label{eq:T_star} \\
    &\text{s.t.}\quad E(T)>\cfrac{n}{\gamma_{c}}, \nonumber
\end{align}
where $W_{t}$ is defined in \eqref{eq:gramian}.
An illustrative interpretation of this optimization is shown by the red line in \figref{fig:sic}.
It should be noted here that $T^{\star}$ in \eqref{eq:T_star} can be determined for any controller as long as $A$ in \eqref{eq:system} is Schur.
However, $T^{\star}$ in this case will be larger than the one when $F^{\star}$ in \eqref{eq:F_star} is used.
This choice, as we will show later, induces a longer key length.
For tractable computation of $E(T)$, we introduce the following proposition.

\begin{proposition}\label{prop:sum_tr_Wt}
    The summation of trace of a finite-time controllability gramian $E(T)$ in \eqref{eq:T_star} can be computatd as
    \[
        E(T)\!=\!\left\{
        \begin{alignedat}{2}
            &n, &\ &T=1, \\
            &2n+\tr(AA^{\top}), &\ &T=2, \\
            &2E(T\!-\!1)\!-\!E(T\!-\!2)\!+\!\tr(A^{T-1}(A^{T-1})^{\top}), &\ &T\ge3.
        \end{alignedat}
        \right.
    \]
\end{proposition}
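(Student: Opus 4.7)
The proposition collects together a base case and a second-order recurrence, so the natural plan is to verify the base cases by direct expansion of $W_{t}$, and then to establish the recurrence via a telescoping argument on successive differences of $E(T)$.

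First I would unfold the definitions. By $W_{t}=\sum_{i=0}^{t}A^{i}(A^{i})^{\top}$ with $A^{0}=I$, the quantity $\tr(W_{t})$ is a partial sum of the sequence $a_{i}\coloneqq \tr(A^{i}(A^{i})^{\top})$, and $a_{0}=n$. For $T=1$, $E(1)=\tr(W_{0})=a_{0}=n$. For $T=2$, $E(2)=\tr(W_{0})+\tr(W_{1})=a_{0}+(a_{0}+a_{1})=2n+\tr(AA^{\top})$. These give the first two lines.

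For $T\geq 3$, the plan is to form two consecutive differences. By definition of $E(T)$ as a cumulative sum, $E(T)-E(T-1)=\tr(W_{T-1})$ and $E(T-1)-E(T-2)=\tr(W_{T-2})$. Subtracting these and using $W_{T-1}-W_{T-2}=A^{T-1}(A^{T-1})^{\top}$, I get $(E(T)-E(T-1))-(E(T-1)-E(T-2))=\tr(A^{T-1}(A^{T-1})^{\top})$, which, after rearrangement, is exactly the claimed third line of the recursion.

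There is essentially no obstacle here: the only mild subtlety is to recognize that the recurrence is of second order rather than first order, which is natural once one notices that $\tr(W_{t})$ is itself the running sum of $a_{i}$, so that $E(T)$ is a double sum, and double sums generically satisfy relations on second differences. No hypothesis beyond the definition of $W_{t}$ is invoked, so in particular Schur stability of $A$ is not needed (that assumption is only used upstream to guarantee finiteness as $T\to\infty$); the identity is purely combinatorial and holds for every $A\in\R^{n\times n}$.
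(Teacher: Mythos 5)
Your proof is correct and is essentially the same argument as the paper's: both verify the base cases by direct expansion and obtain the third line from $E(T)-E(T-1)=\tr(W_{T-1})$ together with $\tr(W_{T-1})-\tr(W_{T-2})=\tr(A^{T-1}(A^{T-1})^{\top})$, i.e., a second-difference/telescoping identity. Your closing remark that Schur stability is not needed for this purely algebraic identity is accurate and consistent with how the paper uses the proposition.
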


\begin{proof}
    See \appref{proof:sum_tr_Wt}
\end{proof}

Although the computational complexity for computing $E(T)$ by the definition is more than $O(T^{2})$, that by \eqref{eq:T_star} can be reduced to $O(T)$, which facilitates the optimization problem \eqref{eq:T_star}.
The obtained $T^{\star}$ is used for designing the minimum key length design problem, which is described in the next section.

\subsection{Key length design}

Suppose that $T^{\star}$ is given by \eqref{eq:T_star}.
Following the key length design step, we find a minimum key length $k^{\star}$ such that the second inequality of \eqref{eq:opt_security} does not hold.
It follows from the second inequality of \eqref{eq:opt_security} and $\tau(T,k)$ in \eqref{eq:tau} that the key length minimization can be summarized as
\begin{equation}
    k^{\star}=\argmin_{k}L(k)\quad\text{s.t.}\quad L(k)>\cfrac{\tau_{c}\Upsilon}{T^{\star}+1}.
    \label{eq:k_star}
\end{equation}
An illustrative interpretation of this optimization is shown by the red line in \figref{fig:sdt}.

In conclusion, we have the following the theorem.

\begin{theorem}\label{thm:design}
    Consider \probref{prob:security} with the assumptions in \corref{cor:gamma}.
    The controller $F^{\star}$ and the minimum key length $k^{\star}$ are given by \eqref{eq:F_star} and \eqref{eq:k_star}, respectively.
    Then, the encrypted control system in \figref{fig:scenario} with the dynamic-key encryption scheme $\Edyn$ in \defref{def:dynenc} is secure.
\end{theorem}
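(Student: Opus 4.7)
The plan is to reduce the security condition in \defref{def:security} to the surrogate condition \eqref{eq:opt_security} via \lemref{lem:gamma}, and then show that the pair $(k^{\star},F^{\star})$ constructed through \eqref{eq:F_star}, \eqref{eq:T_star}, and \eqref{eq:k_star} makes \eqref{eq:opt_security} infeasible for every $T\in\Z^{+}$. Since \corref{cor:gamma} gives $\gamma(T,F^{\star})\le\EV{\tr(\hat{\Lambda}^{-1}(T))}$ under the stated assumptions, it suffices to establish that no $T$ simultaneously satisfies $\gamma(T,F^{\star})<\gamma_{c}$ and $\tau(T,k^{\star})\le\tau_{c}$, where $\tau(T,k^{\star})=(T+1)L(k^{\star})/\Upsilon$ by \lemref{lem:tau}.

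The argument I would give is a dichotomy on $T$ relative to $T^{\star}$. First, for $T<T^{\star}$, the defining property of $T^{\star}$ in \eqref{eq:T_star} as the smallest integer with $E(T)>n/\gamma_{c}$ yields $E(T)\le n/\gamma_{c}$, so by \eqref{eq:gamma_simple} we have $\gamma(T,F^{\star})=n/E(T)\ge\gamma_{c}$ and the first conjunct of \eqref{eq:opt_security} fails. Second, for $T\ge T^{\star}$, the monotonicity of $\tau(T,k^{\star})$ in $T$ combined with the defining property of $k^{\star}$ in \eqref{eq:k_star}, namely $L(k^{\star})>\tau_{c}\Upsilon/(T^{\star}+1)$, gives
\[
\tau(T,k^{\star})=\frac{(T+1)L(k^{\star})}{\Upsilon}\ge\frac{(T^{\star}+1)L(k^{\star})}{\Upsilon}>\tau_{c},
\]
so the second conjunct of \eqref{eq:opt_security} fails. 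Combining the two cases, no $T$ witnesses \eqref{eq:opt_security}, and hence \eqref{eq:security} has no witness either; the system is therefore secure in the sense of \defref{def:security}.

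I do not expect serious technical obstacles here, because all the heavy lifting has already been carried out upstream: \lemref{lem:gamma} supplies the identifying-complexity lower bound that converts the expectation in \eqref{eq:security} into the deterministic quantity $\gamma$, \propref{prop:correctness}--\propref{prop:IND-CPA} legitimize the linear-in-$T$ deciphering-time formula of \lemref{lem:tau} by ruling out key-chaining shortcuts across time steps, and \corref{cor:gamma} lets me use the simplified form \eqref{eq:gamma_simple}. The only point that needs a sentence of care is the direction of the inequality when transferring from $\gamma(T,F^{\star})\ge\gamma_{c}$ to $\EV{\tr(\hat{\Lambda}^{-1}(T))}\ge\gamma_{c}$, which follows immediately because $\gamma$ is a lower bound, so $\EV{\tr(\hat{\Lambda}^{-1}(T))}<\gamma_{c}$ would force $\gamma(T,F^{\star})<\gamma_{c}$. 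The rest is a clean application of the definitions of $T^{\star}$ and $k^{\star}$ as minimizers of the two monotone threshold problems, and the theorem follows without further computation.
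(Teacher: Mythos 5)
Your proposal is correct and follows essentially the same route as the paper's proof, which simply notes that $T^{\star}$ from \eqref{eq:T_star} is the minimum time step with $\gamma(T^{\star},F^{\star})<\gamma_{c}$ and that $k^{\star}$ from \eqref{eq:k_star} forces $\tau(T^{\star},k^{\star})>\tau_{c}$, hence no $T$ satisfies \eqref{eq:opt_security}. Your write-up just makes the dichotomy on $T$ versus $T^{\star}$ and the transfer from $\gamma$ to $\EV{\tr(\hat{\Lambda}^{-1}(T))}$ explicit, which the paper leaves to the surrounding discussion.
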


\begin{proof}
    See \appref{proof:design}.
\end{proof}

A pseudocode of the design algorithm is summarized as \argref{alg:opt_ecs}.

\begin{figure}[!t]
    \begin{algorithm}[H]
        \caption{Optimal design of encrypted control system with dynamic-key encryption}
        \label{alg:opt_ecs}
        \begin{algorithmic}
            \Require $A_{p}$, $B_{p}$, $n$, $\gamma_{c}$, $\tau_{c}$, and $\Upsilon$.
            \Ensure $F^{\star}$ and $k^{\star}$.
            \State \# Controller design.
            \State Solve $P=A_{p}^{\top}PA_{p}-A_{p}^{\top}PB_{p}(B_{p}^{\top}PB_{p})^{-1}B_{p}^{\top}PA_{p}$.
            \State $F^{\star}\gets -(B_{p}^{\top}PB_{p})^{-1}B_{p}^{\top}PA_{p}$.
            \State \# Solve optimization problem of \eqref{eq:T_star}.
            \State $A\gets A_{p}+B_{p}F^{\star}$.
            \State $x\gets n$, $y\gets 0$, $z\gets 0$, $T^{\star}\gets 1$.
            \While{$x\le n/\gamma_{c}$}
                \State $T^{\star}\gets T^{\star}+1$.
                \State $z\gets y$.
                \State $y\gets x$.
                \State $x\gets 2y-z+\tr(A^{T^{\star}-1}(A^{T^{\star}-1})^{\top})$.
            \EndWhile
            \State \# Solve optimization problem of \eqref{eq:k_star}.
            \State $k^{\star}\gets 1$.
            \While{$L(k^{\star})\le\tau_{c}\Upsilon/(T^{\star}+1)$}
                \State $k^{\star}\gets k^{\star}+1$
            \EndWhile
            \State \Return $F^{\star}$, $k^{\star}$.
        \end{algorithmic}
    \end{algorithm}
    \vspace{-8mm}
\end{figure}

\subsection{Other design problems}

The parameters of a sample identifying-complexity curve $\gamma(T,F)$ and a sample deciphering-time curve $\tau(T,k)$ are a time step $T$, controller $F$, and key length $k$.
The optimal key length $k^{\star}$ in \probref{prob:security} is derived under a given controller $F^{\star}$.
Similarly, by fixing $T$ or $k$, the curves can be used for formulation of other design problems.

For example, a problem to design a controller gain $F$ under a given key length $k$ is a reverse problem of \probref{prob:security}.
A degree of freedom in design of $F$ in this problem is restricted by $k$ through the minimum time step $T^{\star}$ satisfying $\tau(T^{\star},k)>\tau_{c}$.
That is, a defender wants to find $F$ achieving a certain degree of stability of a control system, which is implicitly parameterized by $k$.

Furthermore, a problem to design $F$ and $k$ under the given time step $T=\tau_{c}/T_{s}$ is a variant of \probref{prob:security}, where $T_{s}$ is a sampling time.
An adversary in the variant is weaker than one in \probref{prob:security} because he/she uses all data within the life span for the estimation.
Thus, a defender would be required to design a finite-horizon controller maximizing $\gamma(T,F)$ and smaller key length than the solution to \probref{prob:security}.

\section{Numerical Simulation}\label{sec:simulation}

Consider \eqref{eq:plant} with
\[
    A_{p}=
    \begin{bmatrix}
        1 & 0.5 \\
        0 & -1.2
    \end{bmatrix}\!,\quad
    B_{p}=
    \begin{bmatrix}
        0 \\
        1
    \end{bmatrix}\!,\quad
    L=
    \begin{bmatrix}
        10^{4} & 0 \\
        0 & 10^{4}
    \end{bmatrix}\!.
\]
Let a controller $F$ in \eqref{eq:controller} be given so that the poles of $A$ in \eqref{eq:system} are assigned to $\pm0.99$.
We first show how the Bayesian estimation in step 3) of \defref{def:adversary} performs.
Let $\mu=\bfzero$ and $\Lambda=I$.
For each $T\in\{1,\ \dots,\ 5000\}$, we perform the estimation by using a data set $\D$.
\figref{fig:be} shows the result, where the blue lines are the estimated mean values ($\hat{A}_{11},\ \dots,\ \hat{A}_{22}$), and light-blue areas are the $95$~\% confidence intervals determined by $\hat{\Lambda}$.
The true values of $A$ are denoted by the dashed lines.
We can see from these figures that the precision of adversary's estimation improves as the number of samples increases.

\figref{fig:sicc} depicts the sample identifying-complexity curves $\gamma(T,F)$ in \eqref{eq:gamma} for different choices of the gains $F$ that assigns the poles of $A$ to $\pm0.99$, and $F^{\star}$ in \eqref{eq:F_star}.
Let the acceptable variance in \eqref{eq:opt_security} be chosen as $\gamma_{c}=10^{-6}$, which is denoted by the dashed line in the figure.
Then, the minimal time step $T^{\star}$ satisfying $\gamma(T^{\star},F)<\gamma_{c}$ is $18586$ while that for $F^{\star}$ is $384473$.
The time steps are denoted by $T_{F}^{\star}$ and $T_{F^{\star}}^{\star}$, respectively.
This result shows that the stochastic cheap controller \eqref{eq:F_star} improves sample identifying complexity of the closed-loop system.

We next compute the sample deciphering-time curves $\tau(T_{F}^{\star},k)$ and $\tau(T_{F^{\star}}^{\star},k)$ in \eqref{eq:tau}, and $\tau(0,k)$ in \eqref{eq:tau_0} for a comparison purpose.
Note here that the first (resp. second) represents the time for deciphering $T_{F}^{\star}+1$ (resp. $T_{F^{\star}}^{\star}+1$) ciphertexts of the dynamic ElGamal encryption $\Edyn$ in \defref{def:dynenc} while the third represents that for deciphering any ciphertext of the normal ElGamal encryption $\E$.
Note that the third case is irrelevant to controllers because the encryption is static-key encryption.
\figref{fig:sdtc} illustrates those three curves for $k$.
Let a life span and a computer performance be chosen as $\tau_{c}=1.5768\times10^{9}$~s ($50$~years), which is denoted by the dashed line in the figure, and $\Upsilon=442\times10^{15}$~FLOPS, which is the performance of Fugaku supercomputer~\footnote{https://www.top500.org/lists/top500/2020/11/}.
Then, by solving \eqref{eq:k_star}, the optimal key length for each cases is determined to $641$~bit, $734$~bit, and $1091$~bit.
This result implies that the simultaneous use of the dynamic-key encryption and the s-LQR optimal controller can drastically reduce the key length while keeping the security level of the encrypted control system, thereby reducing its computation costs.

Finally, we show how the differences of those three key lengths appear in the online computation times.
All the computations are done by using MacBook Pro (macOS Big Sur, $2.3$~GHz quad-core Intel Core i7, $32$~GB $3733$~MHz LPDDR4X) with C++.
The results are shown in \tabref{tab:ct}.
\figref{fig:ct} depicts the average computation times of $\Enc$, $\Dec$, and $T_{\K}$, which are performed on a plant side.
Their total times in $\E$, $\Edyn$ with $F$, and $\Edyn$ with $F^{\star}$ were $2.24$~ms, $0.94$~ms, and $0.61$~ms, respectively.
This result confirms that the computation time is decreased according to reducing the optimal key length by using the dynamic-key encryption and the optimal controller.
Although one may think that the resultant differences are not significant, the difference will be more significant for larger-dimensional systems.
This is because an online computation of encrypted control systems includes $n$ times of $\Enc$ and $mn$ times of $\Dec$ on a plant side.
Hence, for larger-dimensional systems, the proposed design methodology would be helpful for real-time controls while keeping the security level theoretically.

\begin{figure}[!t]
    \centering
    \subfigure[$\hat{A}_{11}=\hat{\mu}_{1}$.]{\includegraphics[scale=.5]{./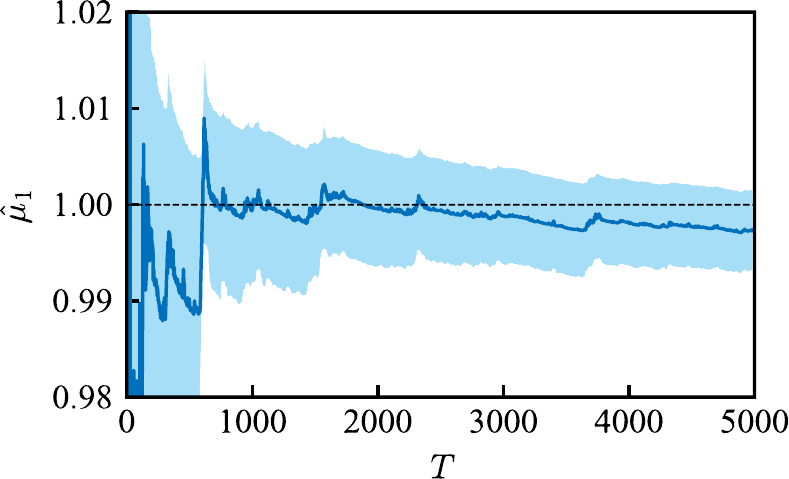}\label{fig:be_A11}}%
    \subfigure[$\hat{A}_{12}=\hat{\mu}_{3}$.]{\includegraphics[scale=.5]{./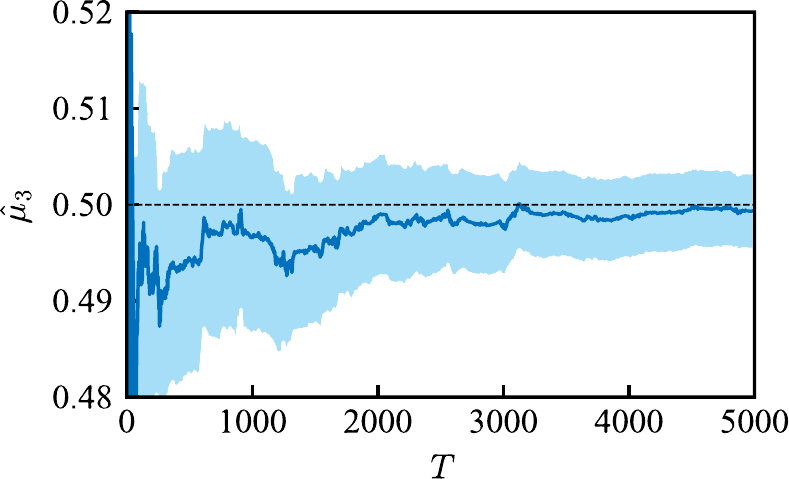}\label{fig:be_A12}}
    \subfigure[$\hat{A}_{21}=\hat{\mu}_{2}$.]{\includegraphics[scale=.5]{./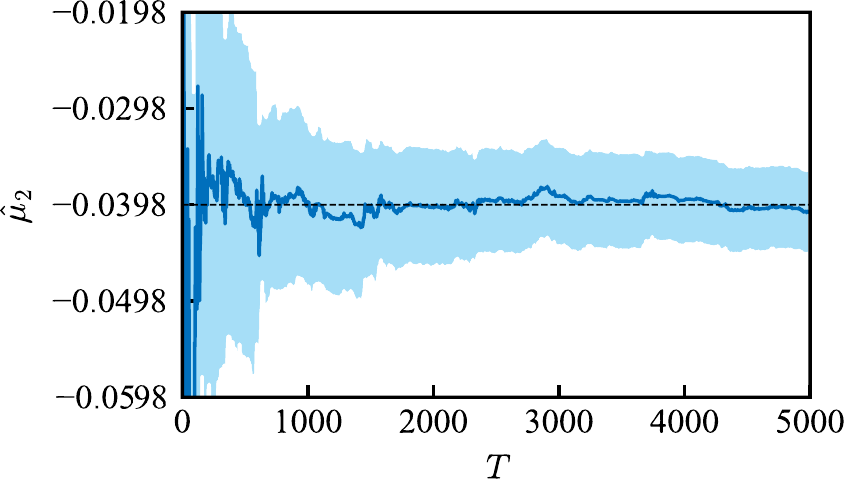}\label{fig:be_A21}}%
    \subfigure[$\hat{A}_{22}=\hat{\mu}_{4}$.]{\includegraphics[scale=.5]{./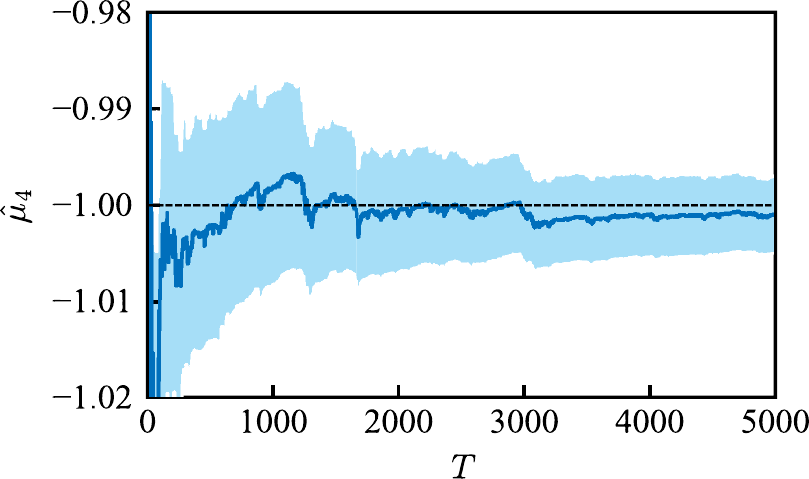}\label{fig:be_A22}}
    \caption{Result of Bayesian estimation for system matrix.}
    \label{fig:be}
\end{figure}

\begin{figure}[!t]
    \centering
    \includegraphics[scale=1]{./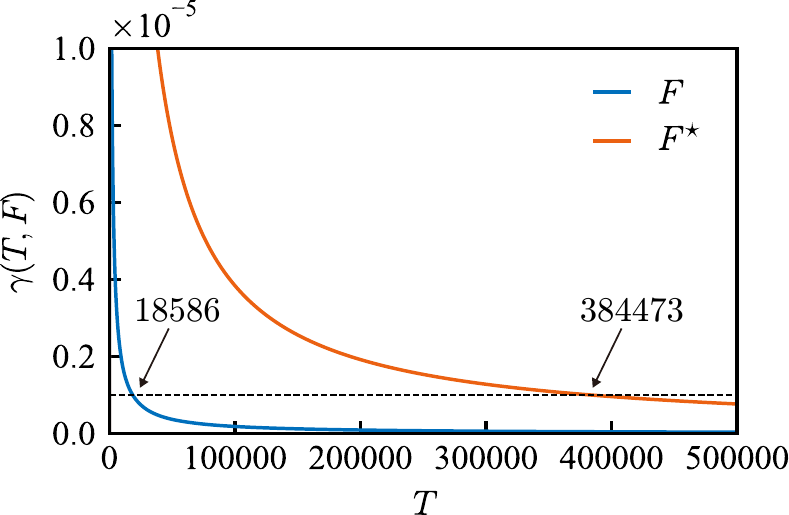}
    \caption{Comparison of sample identifying-complexity curves.}
    \label{fig:sicc}
\end{figure}

\begin{figure}[!t]
    \centering
    \includegraphics[scale=1]{./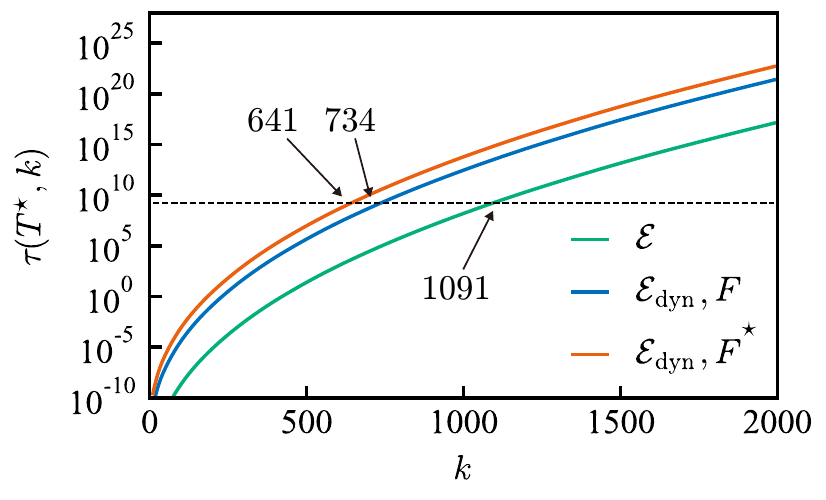}
    \caption{Comparison of sample deciphering-time curves.}
    \label{fig:sdtc}
\end{figure}

\begin{figure}[!t]
    \centering
    \includegraphics[scale=1]{./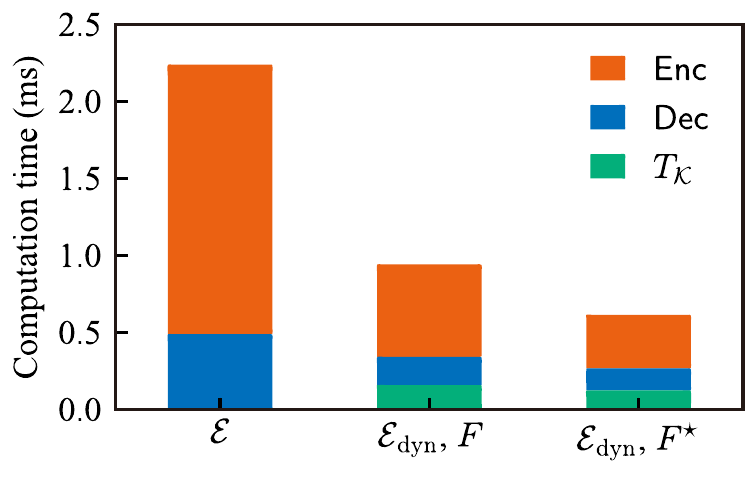}\label{fig:time}
    \caption{Comparison of average computation times on a plant side.}
    \label{fig:ct}
\end{figure}

\begin{table*}[!t]
    \centering
    \caption{Computation Times of $\Enc$, $\Dec$, $T_{\K}$, and $T_{\C}$ ($N=10000$)}
    \begin{tabular}{cccccccccccccccccc}
        \hline
       & \multirow{2}{*}{$k^{\star}$} & \multicolumn{4}{c}{$\Enc$ (ms)}   & \multicolumn{4}{c}{$\Dec$ (ms)}   & \multicolumn{4}{c}{$T_{\K}$ (ms)} & \multicolumn{4}{c}{$T_{\C}$ (ms)} \\
                             &        & Min    & Ave    & Max    & Std    & Min    & Ave    & Max    & Std    & Min    & Ave    & Max    & Std    & Min    & Ave    & Max    & Std    \\ \hline
        $\E$                 & $1091$ & $1.64$ & $1.75$ & $2.90$ & $0.09$ & $0.46$ & $0.49$ & $1.04$ & $0.04$ & --     & --     & --     & --     & --     & --     & --     & --     \\
        $\Edyn$, $F$         & $734$  & $0.56$ & $0.60$ & $1.25$ & $0.04$ & $0.17$ & $0.18$ & $0.57$ & $0.02$ & $0.15$ & $0.16$ & $0.52$ & $0.01$ & $0.42$ & $0.47$ & $1.00$ & $0.05$ \\
        $\Edyn$, $F^{\star}$ & $641$  & $0.32$ & $0.35$ & $0.78$ & $0.03$ & $0.13$ & $0.14$ & $0.41$ & $0.02$ & $0.11$ & $0.13$ & $0.33$ & $0.02$ & $0.32$ & $0.34$ & $0.82$ & $0.03$ \\ \hline
    \end{tabular}
    \vspace{-3mm}
    \label{tab:ct}
\end{table*}

\section{Conclusion}\label{sec:conclusion}

This paper addressed a systematic design of encrypted control systems aginst eavesdropping attacks to construct secure cyber-physical systems.
To quantify the security level of encrypted control systems, the novel security notions, sample identifying complexity and sample deciphering time, were proposed.
The sample identifying complexity characterizes the difficulty of system identification by means of a controllability gramian of a closed-loop system.
Additionally, the sample deciphering time represents the computation time for breaking ciphertexts to collect a data set for the identification.
Combining the notions, the optimal controller was obtained by the traditional stochastic cheap controller that simultaneously maximizes the stability degree of a closed-loop system and the difficulty of the identification.
Furtermore, the optimal key length was determined as the minimum key length enough to prevent the identification with a given precision within a life span of the system.
The numerical simulations demonstrated that the optimal key length and controller effectively reduced the implementation costs of encrypted control systems while keeping their security level.

In our best knowledge, this paper is the first work to reveal the relationship between the cryptographic security and dynamical systems in a control-theoretic manner.
One might think that some papers already related security level and properties of control systems~\cite{Dibaji19,Murguia20,Sandberg10,Milosevic20,Feng21,Cetinkaya20}.
However, these studies considered only the control-theoretic aspect of the impact of cyber-attacks, namely resilience, performance degradation, and detectability.
In contrast, our approach connected the effect of eavesdropping attacks and the characteristics of dynamical systems taking the feasibility of the attacks into consideration in terms of a computation time.

In this paper, the precision of adversary's estimation was evaluated based on a variance, i.e., the second moment about a mean.
However, we did not consider the first moment about the origin of the estimation.
In fact, although the estimator \eqref{eq:est_mean} is a consistent estimator, it is not a non-baiased estimator.
Hence, the adversary would obtain the estimates including a bias with a precision evaluated by the second moment.
This means the security evaluation of this paper is strict with a defender.
We will modify the proposed method to consider both the first and second moments.

Moreover, the estimation of system and input matrices of \eqref{eq:plant} rather than a system matrix of \eqref{eq:system} will be considered.
This would be achievable by rewriting \eqref{eq:plant} as
\[
    \begin{bmatrix}
        x_{t+1} \\
        \bfzero
    \end{bmatrix}=
    \begin{bmatrix}
        A_{p} & B_{p} \\
        O & O
    \end{bmatrix}
    \begin{bmatrix}
        x_{t} \\
        u_{t}
    \end{bmatrix}+
    \begin{bmatrix}
        w_{t} \\
        \bfzero
    \end{bmatrix}.
\]
The equation is the same form of \eqref{eq:system}, and thus, the discussions in this paper would be extended directly.
We will also consider extending the security concepts to be used for more general encrypted control systems, namely dynamic output-feedback controllers and nonlinear plants.
This can be achieved, for example, by using the input-output history feedback controller representation~\cite{Teranishi21} and the Koopman operator~\cite{Koopman31}.
The controller representation realizes a dynamic controller as a matrix-vector product form such as \eqref{eq:controller}.
The Koopman operator lifts a finite-dimensional nonlinear system to an infinite-dimensional linear system.
The proposed scheme can be applied for such systems up to an error due to the truncation of the system dimension.

\appendix

\subsection{Quantization in encrypted control}\label{app:quantization}

This section describes the properties of quantization errors in encrypted control systems with a linear controller
\begin{equation}
    f:(\Phi,\xi)\mapsto\psi=\Phi\xi,
    \label{eq:ctlr}
\end{equation}
where $\Phi\in\R^{\alpha\times\beta}$ is a controller parameter matrix, $\xi\in\R^{\beta}$ is a controller input vector, which consists of a controller state and plant output, and $\psi\in\R^{\alpha}$ is a controller output vector, which consists of a controller state update and plant input.
In this case, an encrypted controller of \eqref{eq:ctlr} with the ElGamal encryption $\E=(\Gen,\Enc,\Dec)$ is given as
\begin{equation}
    f_{\E}:(c_{\Phi},c_{\xi})\mapsto c_{\Psi}
    \label{eq:enc_ctlr}
\end{equation}
where $c_{\Phi}=\Enc(\pk,m_{\Phi})$, $c_{\xi}=\Enc(\pk,m_{\xi})$, $c_{\Psi_{ij}}=c_{\Phi_{ij}}\ast c_{\xi_{j}}\bmod p$, and $m_{\Phi}$ and $m_{\xi}$ are plaintexts of $\Phi$ and $\xi$, respectively.

For implementation of the encrypted controller \eqref{eq:enc_ctlr}, $\xi$ and $\Phi$ must be encoded to plaintexts before encryption and be decoded to real numbers after decryption using an encoder and decoder because a plaintext space is not a set of real numbers.
This study uses the following encoder $\Ecd_{\Delta}:\R\to\M$ and decoder $\Dcd_{\Delta}:\M\to\R$ with a sensitivity $\Delta>0$:
\begin{align*}
    \Ecd_{\Delta}&:x\mapsto\check{x}=\min\left\{\argmin_{m\in\M}|x/\Delta+p\ind{\R_{<0}}{x}-m|\right\}, \\
    \Dcd_{\Delta}&:\check{x}\mapsto\bar{x}=\Delta(\check{x}-p\ind{\Z_{>q}}{\check{x}}),
\end{align*}
where $\R_{<0}\coloneqq\{x\in\R\mid x<0\}$, $\Z_{>q}\coloneqq\{z\in\Z\mid q<z\}$, and $\ind{\mathcal{A}}{\cdot}$ is an indicator function that outputs $1$ if its argument belongs to a set $\mathcal{A}$; otherwise it outputs $0$.
$\Ecd_{\Delta}$ and $\Dcd_{\Delta}$ perform elementwise for a vector and matrix.

Define $\Qtz_{\Delta}\coloneqq\Dcd_{\Delta}\circ\Ecd_{\Delta}$, then $\Qtz_{\Delta}:x\mapsto\bar{x}=x+\tilde{x}$ can be regarded as a quantizer~\cite{Teranishi19_3}.
A quantization error $\tilde{x}=\Qtz_{\Delta}(x)-x$ of $x\in\R$ holds
\begin{equation}
    |\tilde{x}|\le\Delta d_{\max}/2\,,
    \label{eq:error_bound}
\end{equation}
where $d_{\max}$ is the maximum difference between two consecutive elements in the plaintext space.
The inequality \eqref{eq:error_bound} implies that the quantization error decreases as $\Delta$ decreases.
Actually, the following propositions on the relationship between quantization errors and a sensitivity  hold:

\begin{proposition}\label{prop:enc_cont_1}
    Given the ElGamal encryption $\E$ and $f_{\E}$ in \eqref{eq:enc_ctlr}.
    Let $(\pk,\sk)=\Gen(k)$, $m_{\Phi}=\Ecd_{\Delta_{\Phi}}(\Phi)$, $m_{\xi}=\Ecd_{\Delta_{\xi}}(\xi)$, $c_{\Phi}=\Enc(\pk,m_{\Phi})$, and $c_{\xi}=\Enc(\pk,m_{\xi})$.
    Suppose an overflow and underflow do not occur, that is, $\|\Qtz_{\Delta_{\Phi}}(\Phi)/\Delta_{\Phi}\|_{\max}\|\Qtz_{\Delta_{\xi}}(\xi)/\Delta_{\xi}\|_{\infty}\le q$.
    If $\Delta_{\Phi}\to0$ and $\Delta_{\xi}\to0$, then $f_{\E}$ satisfies
    \[
        f^{+}\circ\Dcd_{\Delta_{\Phi}\Delta_{\xi}}\circ\Dec(\sk,f_{\E}(c_{\Phi},c_{\xi}))=f(\Phi,\xi)=\psi,
    \]
    where $f=f^{+}\circ f^{\times}$, and the maps $f^{\times}$ and $f^{+}$ are a multiplication and addition part of $f$ in \eqref{eq:ctlr}, respectively~\cite{Kogiso15}.
\end{proposition}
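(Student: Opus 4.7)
The plan is to trace a single scalar pathway through the encrypted pipeline and then combine the scalar pieces via $f^{+}$, exploiting the multiplicative homomorphism to collapse $\Dec \circ f_{\E}$ into a plaintext product, and exploiting $\Delta\to 0$ to kill the quantization error.

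First I would fix indices $i,j$ and invoke the multiplicative homomorphism of the ElGamal encryption $\E$ (restated at the end of \secref{sec:preliminaries}). By the construction of $f_{\E}$ in \eqref{eq:enc_ctlr}, the $(i,j)$-entry satisfies $c_{\Psi_{ij}} = c_{\Phi_{ij}} \ast c_{\xi_{j}} \bmod p$, so homomorphism yields
\[
    \Dec(\sk, c_{\Psi_{ij}}) = m_{\Phi_{ij}} m_{\xi_{j}} \bmod p.
\]
This reduces the proof to a statement about how the encoder/decoder pair commutes with multiplication in $\Z_{p}$.

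Second, I would show that the decoder with sensitivity $\Delta_{\Phi}\Delta_{\xi}$ inverts this product up to quantization. By the definition of $\Ecd_{\Delta}$, $m_{\Phi_{ij}}$ equals the nearest plaintext to $\Phi_{ij}/\Delta_{\Phi}$ when $\Phi_{ij}\ge 0$ and is shifted by $p$ when $\Phi_{ij}<0$; equivalently, $m_{\Phi_{ij}} \equiv \Qtz_{\Delta_{\Phi}}(\Phi_{ij})/\Delta_{\Phi} \pmod{p}$, and similarly for $\xi_{j}$. Hence
\[
    m_{\Phi_{ij}} m_{\xi_{j}} \equiv \frac{\Qtz_{\Delta_{\Phi}}(\Phi_{ij})\,\Qtz_{\Delta_{\xi}}(\xi_{j})}{\Delta_{\Phi}\Delta_{\xi}} \pmod{p}.
\]
The overflow-free hypothesis bounds the magnitude of this signed integer by $q = (p-1)/2$, so it lies in the unambiguously decodable window of $\Dcd_{\Delta_{\Phi}\Delta_{\xi}}$, which branches on the $\Z_{>q}$ indicator to recover the correct sign. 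Therefore
\[
    \Dcd_{\Delta_{\Phi}\Delta_{\xi}}(\Dec(\sk,c_{\Psi_{ij}})) = \Qtz_{\Delta_{\Phi}}(\Phi_{ij})\,\Qtz_{\Delta_{\xi}}(\xi_{j}).
\]

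Third, I would apply $f^{+}$, i.e., sum over $j$, to assemble the $i$-th output entry as $\sum_{j}\Qtz_{\Delta_{\Phi}}(\Phi_{ij})\Qtz_{\Delta_{\xi}}(\xi_{j}) = f(\Qtz_{\Delta_{\Phi}}(\Phi),\Qtz_{\Delta_{\xi}}(\xi))_{i}$. Finally, letting $\Delta_{\Phi},\Delta_{\xi}\to 0$, the bound \eqref{eq:error_bound} gives $\Qtz_{\Delta_{\Phi}}(\Phi)\to\Phi$ and $\Qtz_{\Delta_{\xi}}(\xi)\to\xi$ entrywise, and continuity of $f$ in \eqref{eq:ctlr} yields $f(\Phi,\xi) = \psi$.

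The main obstacle is the second step: the encoder's sign handling via the $p\,\ind{\R_{<0}}{\cdot}$ offset and the decoder's corresponding $p\,\ind{\Z_{>q}}{\cdot}$ branch must compose correctly through a modular multiplication, for every combination of signs of $\Phi_{ij}$ and $\xi_{j}$. The crux is that the overflow bound $\|\Qtz_{\Delta_{\Phi}}(\Phi)/\Delta_{\Phi}\|_{\max}\|\Qtz_{\Delta_{\xi}}(\xi)/\Delta_{\xi}\|_{\infty}\le q$ is exactly the width of the decodable bipolar interval under the $p=2q+1$ parametrization, which ensures no wraparound corrupts the product; a short case analysis on the signs then verifies the identity cleanly.
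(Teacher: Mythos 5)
Your proposal is correct and follows essentially the same route as the paper's proof: collapse $\Dec\circ f_{\E}$ via the multiplicative homomorphism, observe that under the no-overflow condition the decoded $(i,j)$ entry equals $\Qtz_{\Delta_{\Phi}}(\Phi_{ij})\Qtz_{\Delta_{\xi}}(\xi_{j})=\Psi_{ij}+\tilde{\Psi}_{ij}$, and use the bound \eqref{eq:error_bound} to send the quantization error to zero as $\Delta_{\Phi},\Delta_{\xi}\to0$. Your treatment of the sign/wraparound bookkeeping is slightly more explicit than the paper's one-line assertion, but the argument is the same.
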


\begin{proof}
    From the multiplicative homomorphism of $\E$, we have $f^{+}\circ\Dcd_{\Delta_{\Phi}\Delta_{\xi}}\circ\Dec(\sk,f_{\E}(c_{\Phi},c_{\xi}))=f^{+}\circ\Dcd_{\Delta_{\Phi}\Delta_{\xi}}\circ f^{\times}(m_{\Phi},m_{\xi})$.
    Regardless of the signs of $\Phi_{ij}$ and $\xi_{j}$, when an overflow and underflow do not occur, then the $(i,j)$ entry of $\Dcd_{\Delta_{\Phi}\Delta_{\xi}}\circ f^{\times}(m_{\Phi},m_{\xi})$ is given as $\Delta_{\Phi}\Delta_{\xi}(\Phi_{ij}/\Delta_{\Phi}+\delta_{\Phi_{ij}})(\xi_{j}/\Delta_{\xi}+\delta_{\xi_{j}})=\Psi_{ij}+\tilde{\Psi}_{ij}$, where $\tilde{\Psi}_{ij}=\tilde{\Phi}_{ij}\xi_{j}+\Phi_{ij}\tilde{\xi}_{j}+\tilde{\Phi}_{ij}\tilde{\xi}_{j}$, $\tilde{\Phi}_{ij}=\Delta_{\Phi}\delta_{\Phi_{ij}}=\Qtz_{\Delta_{\Phi}}(\Phi_{ij})-\Phi_{ij}$, and $\tilde{\xi}_{j}=\Delta_{\xi}\delta_{\xi_{j}}=\Qtz_{\Delta_{\xi}}(\xi_{j})-\xi_{j}$.
    From \eqref{eq:error_bound}, we obtain $|\tilde{\Psi}_{ij}|\le|\xi_{j}|\Delta_{\Phi}d_{\max}/2+|\Phi_{ij}|\Delta_{\xi}d_{\max}/2+\Delta_{\Phi}\Delta_{\xi}d_{\max}^{2}/4$.
    Therefore, $f^{+}\circ\Dcd_{\Delta_{\Phi}\Delta_{\xi}}\circ f^{\times}(m_{\Phi},m_{\xi})=\psi$ as $\Delta_{\Phi}\to0$ and $\Delta_{\xi}\to0$ because of $\tilde{\Psi}_{ij}\to0$.
\end{proof}

\begin{proposition}\label{prop:enc_cont_2}
    Let $k$ be a key length of the ElGamal encryption.
    Suppose $\max\{\|\Phi\|_{\max},\|\xi\|_{\infty}\}$ exists.
    Then, there exist seisitivities $\Delta_{\Phi}(k)$ and $\Delta_{\xi}(k)$ satisfying $\|\Qtz_{\Delta_{\Phi}}(\Phi)/\Delta_{\Phi}(k)\|_{\max}\|\Qtz_{\Delta_{\xi}}(\xi)/\Delta_{\xi}(k)\|_{\infty}\le q$ such that $\lim_{k\to\infty}\Delta_{\Phi}(k)=0$ and $\lim_{k\to\infty}\Delta_{\xi}(k)=0$.
\end{proposition}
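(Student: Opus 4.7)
The plan is to leverage the exponential growth $q \geq 2^{k-1}$ of the prime modulus as $k$ grows, while the overflow constraint's left-hand side scales only polynomially in $1/\Delta$. This asymptotic gap lets me construct sensitivities of order $1/\sqrt{q}$ that tend to zero while keeping the product bounded by $q$.

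First, dividing \eqref{eq:error_bound} by $\Delta$ and applying the triangle inequality componentwise gives
\[
\|\Qtz_{\Delta_{\Phi}}(\Phi)/\Delta_{\Phi}\|_{\max} \leq \|\Phi\|_{\max}/\Delta_{\Phi} + d_{\max}/2,
\]
and analogously for $\xi$. Setting $M_{\Phi} := \|\Phi\|_{\max}$ and $M_{\xi} := \|\xi\|_{\infty}$, both finite by hypothesis, the overflow/underflow condition is implied by
\[
(M_{\Phi}/\Delta_{\Phi} + d_{\max}/2)(M_{\xi}/\Delta_{\xi} + d_{\max}/2) \leq q.
\]

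Second, since $q$ is a $k$-bit prime, $q \to \infty$ as $k \to \infty$. For $k$ large enough that $d_{\max} \leq \sqrt{q}$, I would set $\Delta_{\Phi}(k) := 2M_{\Phi}/\sqrt{q}$ and $\Delta_{\xi}(k) := 2M_{\xi}/\sqrt{q}$; each factor of the product is then bounded by $\sqrt{q}/2 + d_{\max}/2 \leq \sqrt{q}$, so their product is at most $q$. For the remaining finitely many small $k$, any pair $(\Delta_{\Phi},\Delta_{\xi})$ satisfying the constraint will do, since only the asymptotic behavior is relevant. Consequently, $\Delta_{\Phi}(k), \Delta_{\xi}(k) = O(1/\sqrt{q}) \to 0$, as required.

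The main obstacle is controlling the $k$-dependence of $d_{\max}$, the largest gap between consecutive elements of $\G$ inside $\{1,\dots,p-1\}$. Since $|\G|=q\approx p/2$, a bound of the form $d_{\max}=o(\sqrt{q})$ is what makes the construction above yield sensitivities tending to zero. I would either invoke standard number-theoretic estimates on gaps between quadratic residues modulo the safe prime $p=2q+1$ (which conjecturally grow at most polylogarithmically in $p$) or absorb this into a mild hypothesis on the plaintext space, as is customary in the encrypted-control literature. Either way, the exponential growth of $q$ relative to $d_{\max}$ dominates, and the proposition follows.
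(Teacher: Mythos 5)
Your construction is essentially the paper's: both arguments set the sensitivities proportional to $1/\sqrt{q}$ (the paper takes $\Delta(k)=\Delta_{\Phi}(k)=\Delta_{\xi}(k)=D/(2^{\frac{1}{2}(k-1)}-d_{\max}/2)$ with $D=\max\{\|\Phi\|_{\max},\|\xi\|_{\infty}\}$, which is the same $O(1/\sqrt{q})$ scaling after using $q>2^{k-1}$), both bound $\|\Qtz_{\Delta}(\cdot)/\Delta\|$ by the true magnitude plus $d_{\max}/2$ via \eqref{eq:error_bound}, and both reduce the whole proposition to showing that $d_{\max}$ grows strictly slower than $\sqrt{q}$.

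However, that last reduction is exactly where you stop, and it is the only nontrivial step. You offer two escape routes --- ``invoke standard number-theoretic estimates ... which conjecturally grow at most polylogarithmically'' or ``absorb this into a mild hypothesis on the plaintext space'' --- and neither closes the argument as stated. The second would change the proposition by adding an assumption it does not carry. The first mischaracterizes what is needed and what is available: you do not need a polylogarithmic (conjectural) gap bound, only $d_{\max}=o(\sqrt{q})$, and this is available \emph{unconditionally}. The paper cites Burgess's theorem that the longest run of consecutive quadratic non-residues modulo $p$ is $O(p^{\frac{1}{4}+\delta})$; since $\M=\G$ is the set of quadratic residues modulo the safe prime $p=2q+1$ and $p\in(2^{k},2^{k+1})$, this gives $d_{\max}\le 2^{(\frac{1}{4}+o(1))(k+1)}$, so $2^{\frac{1}{2}(k-1)}-d_{\max}/2\to\infty$ and $\Delta(k)\to0$. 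Without supplying this (or an equivalent) unconditional estimate, your proof has a genuine gap precisely at the point your own ``main obstacle'' paragraph identifies.
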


\begin{proof}
    Let $\Delta(k)=\Delta_{\Phi}(k)=\Delta_{\xi}(k)$, $\bar{\Phi}=\Qtz_{\Delta_{\Phi}}(\Phi)$,  $\bar{\xi}=\Qtz_{\Delta_{\xi}}(\xi)$, and $D=\max\{\|\Phi\|_{\max},\|\xi\|_{\infty}\}$.
    Then, the inequality $\|\bar{\Phi}/\Delta_{\Phi}(k)\|_{\max}\|\bar{\xi}/\Delta_{\xi}(k)\|_{\infty}\le q$ can be deformed as $\Delta(k)\ge\sqrt{\|\bar{\Phi}\|_{\max}\|\bar{\xi}\|_{\infty}/q}$.
    Since $q\in(2^{k-1},2^{k})$, we obtain the sufficient condition $\Delta(k)=\sqrt{(D+\Delta(k)d_{\max}/2)^{2}/2^{k-1}}=D/(2^{\frac{1}{2}(k-1)}-d_{\max}/2)$ to hold the inequality, where, using \eqref{eq:error_bound}, $\|\bar{\Phi}\|_{\max}\|\bar{\xi}\|_{\infty}$ is bounded from above as $\|\bar{\Phi}\|_{\max}\|\bar{\xi}\|_{\infty}\le(\|\Phi\|_{\max}+\Delta(k) d_{\max}/2)(\|\xi\|_{\infty}+\Delta(k)d_{\max}/2)\le(D+\Delta(k)d_{\max}/2)^{2}$.
    By definition, every elements in a plaintext space of the ElGamal encryption are quadratic residues modulo $p$.
    The author of~\cite{Burgess57} shows that the number of consecutive quadratic non-residues modulo $p$ is at most $O(p^{\frac{1}{4}+\delta})$ for large $p$ and any positive number $\delta$.
    This means $d_{\max}\le2^{(\frac{1}{4}+o(1))(k+1)}$ since $p\in(2^{k},2^{k+1})$, and $\lim_{k\to\infty}2^{\frac{1}{2}(k-1)}-d_{\max}/2\ge\lim_{k\to\infty}2^{\frac{1}{2}(k-1)}-2^{(\frac{1}{4}+o(1))(k+1)}=\infty$.
    Therefore, $\Delta(k)=\Delta_{\Phi}(k)=\Delta_{\xi}(k)\to0$ as $k\to\infty$.
\end{proof}

\propref{prop:enc_cont_1} shows an output of $f_{\E}$ exactly matches one of $f^{\times}$ if the sensitivities are zero as long as $\Ecd_{\Delta}$ and $\Dcd_{\Delta}$ do not cause an overflow and underflow.
\propref{prop:enc_cont_2} guarantees such sensitivities exist when a key length is sufficiently large.
If a key length is relatively small, then quantization errors cannot be ignored.
The quantization errors would degrade the precision of adversary's estimation of $A$ in \eqref{eq:system}, that is, the number of data needs to be increased in order to keep the precision of the estimation.

\subsection{Technical lemmas}

This section introduces two technical lemmas used for proofs in the following appendices.

\begin{lemma}\label{lem:quadratic}
    Let $v\in\R^{n}$ and $M\in\R^{n\times n}$, then
    \[
        \tr((v\otimes I)M(v\otimes I)^{\top})=\tr(M)\|v\|^{2}.
    \]
\end{lemma}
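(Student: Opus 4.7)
The plan is to exploit the cyclic property of the trace together with the mixed-product property of Kronecker products; no case analysis or induction is needed.

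First, I would apply the cyclic property to push one factor around, obtaining
\[
\tr\bigl((v\otimes I)M(v\otimes I)^{\top}\bigr)=\tr\bigl(M\,(v\otimes I)^{\top}(v\otimes I)\bigr).
\]
Here $v\otimes I$ has size $n^{2}\times n$, so $(v\otimes I)^{\top}(v\otimes I)$ is an $n\times n$ matrix and $M$ is compatible for the trace on the right-hand side.

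Next, using $(v\otimes I)^{\top}=v^{\top}\otimes I$ and the mixed-product property $(A\otimes B)(C\otimes D)=(AC)\otimes(BD)$, I would compute
\[
(v\otimes I)^{\top}(v\otimes I)=(v^{\top}\otimes I)(v\otimes I)=(v^{\top}v)\otimes(II)=\|v\|^{2}\,I.
\]
Substituting back and pulling the scalar $\|v\|^{2}$ outside the trace gives
\[
\tr\bigl(M\,\|v\|^{2}I\bigr)=\|v\|^{2}\tr(M),
\]
which is the desired identity.

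There is no genuine obstacle: the proof is a two-line computation once the Kronecker identities are invoked. The only point requiring mild care is bookkeeping of the identity matrix sizes (the $I$ inside $v\otimes I$ is $n\times n$, and the one appearing in the simplified product is the same $n\times n$ identity), but both are forced by the requirement that the final product be compatible with $M\in\R^{n\times n}$, so no ambiguity arises.
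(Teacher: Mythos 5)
Your proof is correct, but it takes a genuinely different route from the paper's. The paper argues by brute force: it writes $v\otimes I$ as the block column $[v_{1}I\ \cdots\ v_{n}I]^{\top}$, forms the full $n^{2}\times n^{2}$ product $(v\otimes I)M(v\otimes I)^{\top}$, whose $(i,j)$ block is $v_{i}v_{j}M$, and reads the trace off the diagonal blocks $v_{i}^{2}M$ to get $\sum_{i}v_{i}^{2}\tr(M)$. (Its intermediate display even writes the product as a diagonal matrix, which it is not --- only the diagonal entries are what matter for the trace --- so the block-computation route requires some care that the paper glosses over.) You instead invoke the cyclic property $\tr(XY)=\tr(YX)$ for conformable rectangular factors and the mixed-product rule to collapse $(v\otimes I)^{\top}(v\otimes I)=(v^{\top}v)\otimes I=\|v\|^{2}I$, after which the identity is immediate. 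Your version is shorter, avoids the block bookkeeping entirely, and generalizes more readily (e.g.\ it shows at once that $\tr((v\otimes I)M(w\otimes I)^{\top})=(w^{\top}v)\tr(M)$); the paper's version has the minor virtue of being checkable entry by entry without quoting any Kronecker identities. Both are complete proofs.
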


\begin{proof}
    \begin{align*}
        &\tr((v\otimes I)M(v\otimes I)^{\top}) \\
        &=\tr{\left(
            \begin{bmatrix}
                v_{1}I \\
                \vdots \\
                v_{n}I
            \end{bmatrix}
            \begin{bmatrix}
                m_{11} & \cdots & m_{1n} \\
                \vdots & \ddots & \vdots \\
                m_{n1} & \cdots & m_{nn}
            \end{bmatrix}
            \begin{bmatrix}
                v_{1}I & \cdots & v_{n}I
            \end{bmatrix}
        \right)}\!, \\
        &=\tr(\diag{m_{11}v_{1}^{2},\cdots,m_{nn}v_{1}^{2},\cdots,m_{11}v_{n}^{2},\cdots,m_{nn}v_{n}^{2}}), \\
        &=(m_{11}+\cdots+m_{nn})(v_{1}^{2}+\cdots+v_{n}^{2})=\tr(M)\|v\|^{2}.
    \end{align*}
    This completes the proof.
\end{proof}

\begin{lemma}\label{lem:invertible}
    Let $M\in\R^{m\times n}$ be a full column rank matrix, and $P\in\R^{m\times m}$ be a positive definite matrix, then $M^{\top}PM$ is positive definite and invertible.
\end{lemma}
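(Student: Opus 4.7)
The plan is to verify positive definiteness directly from the definition and then invoke the standard fact that a positive definite matrix is invertible. First I would note that $M^{\top}PM \in \R^{n\times n}$ is symmetric (since $(M^{\top}PM)^{\top} = M^{\top}P^{\top}M = M^{\top}PM$ by symmetry of $P$), so the relevant positivity condition is the usual one on quadratic forms.

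Next I would pick an arbitrary nonzero $x \in \R^{n}$ and evaluate the quadratic form $x^{\top}M^{\top}PMx = (Mx)^{\top}P(Mx)$. Setting $y \coloneqq Mx$, full column rank of $M$ means the map $x \mapsto Mx$ has trivial kernel, so $x \neq 0$ implies $y \neq 0$. Positive definiteness of $P$ then gives $y^{\top}Py > 0$, establishing that $M^{\top}PM$ is positive definite.

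Finally, since every positive definite matrix has strictly positive eigenvalues and is therefore nonsingular, $M^{\top}PM$ is invertible. There is no substantive obstacle here: the only point requiring care is the use of full column rank (rather than merely nonzero-ness of $M$) to conclude $Mx \neq 0$ for all $x \neq 0$, which is precisely where the hypothesis enters.
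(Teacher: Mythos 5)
Your proof is correct and follows essentially the same route as the paper's: evaluate the quadratic form $x^{\top}(M^{\top}PM)x=(Mx)^{\top}P(Mx)$ for nonzero $x$ and conclude positive definiteness, hence invertibility. You are in fact slightly more careful than the paper, which leaves implicit the step that full column rank guarantees $Mx\neq 0$.
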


\begin{proof}
    For any non-zero vector $x\in\R^{n}$,
    \[
        x^{\top}(M^{\top}PM)x=(Mx)^{\top}P(Mx)>0.
    \]
    Therefore, $M^{\top}PM$ is positive definite, and this also means it is invertible.
\end{proof}

\subsection{Proof of \lemref{lem:gamma}}\label{proof:lem_gamma}

\begin{proof}
    From Bayes' theorem, the probability density functions in \defref{def:adversary} hold $p(A|\D)=p(\D|A)p(A)/p(\D)\propto p(\D|A)p(A)$.
    Additionally,
    \begin{align*}
        p(\D|A)
        &=p(x_{0})\prod_{t=0}^{T-1}p(x_{t+1}|x_{t}), \\
        &=p(x_{0})\prod_{t=0}^{T-1}f(x_{t+1};(x_{t}\otimes I)^{\top}\vec(A),L^{-1}).
    \end{align*}
    where $Ax_{t}=(x_{t}\otimes I)^{\top}\vec(A)$.
    It follows that
    \begin{align*}
        &\ln{p(A|\D)} \\
        &=\ln p(\D|A)+\ln p(A)+\const, \\
        &=\sum_{t=0}^{T-1}\ln f(x_{t+1};(x_{t}\otimes I)^{\top}\vec(A),L^{-1}) \\
        &\quad+\ln f(\vec(A);\mu,\Lambda^{-1})+\ln p(x_{0})+\const, \\
        &=-\cfrac{1}{2}\sum_{t=0}^{T-1}(x_{t+1}-(x_{t}\otimes I)^{\top}\vec(A))^{\top}L \\
        &\qquad\qquad\qquad\qquad\quad(x_{t+1}-(x_{t}\otimes I)^{\top}\vec(A)) \\
        &\quad-\cfrac{1}{2}(\vec(A)-\mu)^{\top}\Lambda(\vec(A)-\mu)+\const, \\
        &=-\cfrac{1}{2}\left\{\vec(A)^{\top}\!\left(\Lambda+\sum_{t=0}^{T-1}(x_{t}\otimes I)L(x_{t}\otimes I)^{\top}\right)\!\vec(A)\right. \\
        &\quad\left.-2\vec(A)^{\top}\!\left(\Lambda\mu+\sum_{t=0}^{T-1}(x_{t}\otimes I)Lx_{t+1}\right)\!\right\}+\const, \\
        &=-\cfrac{1}{2}\,(\vec(A)^{\top}\hat{\Lambda}\vec(A)-2\vec(A)^{\top}\hat{\Lambda}\hat{\mu})+\const,
    \end{align*}
    where
    \begin{align*}
        \hat{\Lambda}&=\Lambda+\sum_{t=0}^{T-1}(x_{t}\otimes I)L(x_{t}\otimes I)^{\top}, \\
        \hat{\mu}&=\hat{\Lambda}^{-1}\left(\Lambda\mu+\sum_{t=0}^{T-1}(x_{t}\otimes I)Lx_{t+1}\right).
    \end{align*}
    That is, $p(A|\D)=f(\vec(A);\hat{\mu},\hat{\Lambda}^{-1})$.
    Furthermore, it follows from \lemref{lem:quadratic} that
    \[
        \tr(\hat{\Lambda}^{-1})\ge n^{2}\tr(\hat{\Lambda})^{-1}=n^{2}\left(\tr(\Lambda)+\tr(L)\sum_{t=0}^{T-1}\|x_{t}\|^{2}\right)^{-1}.
    \]
    The solution of \eqref{eq:system} is $x_{t}=\sum_{i=0}^{t}A^{i}w_{t-i-1}$ for all $t\ge0$, where $w_{-1}\coloneqq x_{0}$.
    Therefore, the sample-identifying complexity curve $\gamma(T,F)$ is given as
    \begin{align*}
        \gamma(T,F)
        &\!=\!\EV{n^{2}\left(\tr(\Lambda)+\tr(L)\sum_{t=0}^{T-1}\|x_{t}\|^{2}\right)^{-1}}, \\
        &\!=\!n^{2}\left\{\tr(\Lambda)+\tr(L)\sum_{t=0}^{T-1}\tr\left(\sum_{i=0}^{t}A^{i}\Sigma(A^{i})^{\top}\right)\right\}^{-1},
    \end{align*}
    where
    \begin{align*}
        \EV{\|x_{t}\|^{2}}
        &=\tr\left(\!\EV{\!\left(\sum_{i=0}^{t}A^{i}w_{t-i-1}\right)\!\!\left(\sum_{i=0}^{t}A^{i}w_{t-i-1}\right)^{\top}}\!\right), \\
        &=\tr\left(\sum_{i=0}^{t}A^{i}\EV{w_{t-i-1}w_{t-i-1}^{\top}}(A^{i})^{\top}\right), \\
        &=\tr\left(\sum_{i=0}^{t}A^{i}\Sigma(A^{i})^{\top}\right).
    \end{align*}
    This completes the proof.
\end{proof}

\subsection{Proof of \corref{cor:est_error}}\label{proof:cor_est_error}

\begin{proof}
    When a sample size $T$ is sufficiently large, the estimates $\hat{\Lambda}$ and $\hat{\mu}$ are given by
    \begin{align*}
        \hat{\Lambda}&\approx\sum_{t=0}^{T-1}(x_{t}\otimes I)L(x_{t}\otimes I)^{\top}, \\
        \hat{\mu}&\approx\vec(A)+\hat{\Lambda}^{-1}\left(\sum_{t=0}^{T-1}(x_{t}\otimes I)Lw_{t}\right),
    \end{align*}
    where $x_{t+1}=Ax_{t}+w_{t}=(x_{t}\otimes I)^{\top}\vec(A)+w_{t}$.
    Hence,
    \begin{align*}
        &\EV{\|A-\hat{A}\|_{F}^{2}}=\EV{\|\vec(A)-\hat{\mu}\|^{2}} \\
        &=\EV{\left\|\hat{\Lambda}^{-1}\left(\sum_{t=0}^{T-1}(x_{t}\otimes I)Lw_{t}\right)\right\|^{2}}, \\
        &=\EV{\tr\!\left(\!\hat{\Lambda}^{-1}\!\left(\sum_{t=0}^{T-1}(x_{t}\!\otimes\! I)Lw_{t}w_{t}^{\top}L^{\top}(x_{t}\!\otimes\! I)^{\top}\!\right)\!\left(\!\hat{\Lambda}^{-1}\!\right)^{\!\top}\!\right)\!}, \\
        &=\EV{\tr\left(\hat{\Lambda}^{-1}\left(\sum_{t=0}^{T-1}(x_{t}\otimes I)L(x_{t}\otimes I)^{\top}\right)\left(\hat{\Lambda}^{-1}\right)^{\top}\right)}, \\
        &=\EV{\tr(\hat{\Lambda}^{-1})}.
    \end{align*}
    This completes the proof.
\end{proof}

\subsection{Proof of \corref{cor:gamma}}\label{proof:cor_gamma}

\begin{proof}
    From the assumptions, $\tr(L)=n\sigma^{-2}$ and
    \[
        \tr\left(\sum_{i=0}^{t}A^{i}\Sigma(A^{i})^{\top}\right)=\sigma^{2}\tr\left(\sum_{i=0}^{t}A^{i}(A^{i})^{\top}\right).
    \]
    Thus,
    \begin{align*}
        \gamma(T,F) 
        &=n^{2}\left\{n\sigma^{-2}\sum_{t=0}^{T-1}\sigma^{2}\tr\left(\sum_{i=0}^{t}A^{i}(A^{i})^{\top}\right)\right\}^{-1}, \\
        &=n\left\{\sum_{t=0}^{T-1}\tr\left(\sum_{i=0}^{t}A^{i}(A^{i})^{\top}\right)\right\}^{-1}.
    \end{align*}
    This completes the proof.
\end{proof}

\subsection{Proof of \propref{prop:correctness}}\label{proof:correctness}

\begin{proof}
    Let $(\pk_{t},\sk_{t})=((p,q,g,h_{t}),s_{t})$, $c_{t}=(c_{1,t},c_{2,t})$, and $r_{t}$ be a random number used in the encryption algorithm at time $t$.
    From the proof of Theorem~$1$ in~\cite{Teranishi20_5}, $\Dec(\sk_{t},\Enc(\pk_{t},m))=m\bmod p$ is satisfied.
    The remaining part $\Dec(\sk_{t},c_{t})=m\bmod p$ is obtained by direct calculation as
    \begin{align*}
        {c_{1,t}}^{-s_{t}}c_{2,t}
        &=(g^{r_{t-1}}g^{r'})^{-s_{t}}(g^{r_{t-1}}g^{r'})^{s'}m{h_{t-1}}^{r_{t-1}}{h_{t-1}}^{r'}, \\
        &=mg^{-(r_{t-1}+r')(s_{t-1}+s')}g^{(r_{t-1}+r')(s_{t-1}+s')}, \\
        &=m\bmod p.
    \end{align*}
    Furthermore,
    \begin{align*}
        \Dec(\sk_{t},c_{t}\!\ast\!\Enc(\pk_{t},m')\bmod p)
        &\!=\!(c_{1,t}g^{r_{t}})^{-s_{t}}c_{2,t}m'{h_{t}}^{r_{t}}, \\
        &\!=\!{c_{1,t}}^{-s_{t}}c_{2,t}m'g^{-s_{t}r_{t}}g^{s_{t}r_{t}}, \\
        &\!=\!mm'\bmod p.
    \end{align*}
    This completes the proof.
\end{proof}

\subsection{Proof of \propref{prop:negligible}}\label{proof:negligible}

\begin{proof}
    Let $(\pk_{t},\sk_{t})=((p,q,g,h_{t}),s_{t})$ and $c_{t}=(c_{1,t},c_{2,t})$.
    The adversary cannot calculate $\sk_{t+1}=\sk_{t}+s_{t}'$ and $\sk_{t-1}=\sk_{t-2}+s_{t-2}'$ even though he/she knows $\sk_{t}=\sk_{t-1}+s_{t-1}'$, $\pk_{t}=g^{s_{t-1}+s_{t-1}'}$, $c_{1,t}=g^{r_{t-1}+r_{t-1}'}$, $c_{2,t}=mg^{(r_{t-1}+r_{t-1}')(s_{t-1}+s_{t-1}')}$, $m={c_{1,t}}^{-\sk_{t}}c_{2,t}$, and $r_{t-1}+r_{t-1}'=\log_{g}c_{1,t}$ as long as $s_{t}'$, $s_{t-1}'$, and $s_{t-2}'$ are secret.

    $s_{t}$ is randomly updated, i.e., $s_{t}\sim\U(\Z_{q})$~\cite{Teranishi20_5}.
    Additionally, $h_{t}=g^{s_{t}}\bmod p\sim\U(\G)$ because $s_{t}\sim\U(\Z_{q})$ and $\G$ is isomorphic to $\Z_{q}$~\cite{Teranishi20_5}.
    Similarly, $c_{1,t}=c_{1,t-1}g^{r_{t-1}'}\bmod p\sim\U(\G)$ and $c_{2,t}={c_{1,t-1}}^{s_{t-1}'}c_{2,t-1}(g^{s_{t-1}'}h_{t-1})^{r_{t-1}'}\bmod p\sim\U(\G)$ since $r_{t}'\sim\U(\Z_{q})$, $g^{r_{t}'}\bmod p\sim\U(\G)$, $g^{s_{t}'}h_{t}\bmod p\in\G$ and $(g^{s_{t}'}h_{t})^{r_{t}'}\bmod p\sim\U(\G)$.
    These facts conclude samples $\{h_{t}\}_{t\in\mathcal{I}}$, $\{s_{t}\}_{t\in\mathcal{I}}$, $\{c_{1,t}\}_{t\in\mathcal{I}}$, and $\{c_{2,t}\}_{t\in\mathcal{I}}$ for any time span $\mathcal{I}\subset[0,\infty)$ are unbiased.
    Therefore, the best strategy for the adversary to estimate $\sk_{t+1}$ and $\sk_{t-1}$ is random sampling from $\Z_{q}$, that is, $\Pr(\hat{\sk}_{t+1}=\sk_{t+1})=\Pr(\hat{\sk}_{t-1}=\sk_{t-1})=q^{-1}$.

    Let $\epsilon(k)=2^{-(k-1)}$ with a key length $k>1$.
    Then, for every positive integers $c>0$, there exists $N\in\Z$ such that $\epsilon(k)<k^{-c}$ for all $k>N$ because $\epsilon(k)$ and $k^{-c}$ decrease monotonically for $k>1$ and satisfy $\lim_{k\to\infty}\epsilon(k)/k^{-c}=\lim_{k\to\infty}2k^{c}/2^{k}=0$.
    Therefore, $\epsilon(k)$ is negligible and satisfies $\Pr(\hat{\sk}_{t+1}=\sk_{t+1})=\Pr(\hat{\sk}_{t-1}=\sk_{t-1})=q^{-1}<\epsilon(k)$ since $q\in(2^{k-1},2^{k})$.
\end{proof}

\subsection{Proof of \propref{prop:IND-CPA}}\label{proof:IND-CPA}

The security of a cryptosystem is formally defined via a game between a challenger and an adversary~\cite{Shoup06}.
The IND-CPA game is described as follows:
1) The challenger generates a key pair and gives the public key to the adversary.
2) The adversary chooses two plaintexts based on his/her knowledge that is only the public key in this case and sends the plaintexts to the challenger.
3) The challenger randomly selects a plaintext from the given plaintexts and returns it to the adversary.
4) The adversary guesses which plaintext was encrypted.
This process can be formulated by using probabilistic polynomial-time algorithms, $A_{0}$ and $A_{1}$, as follows.
\begin{center}
    \fbox{
        \begin{minipage}{.9\columnwidth}
            $\mathsf{IND\mathchar`-CPA}$
            \begin{enumerate}
                \item $(\pk,\sk)=\Gen(k).$
                \item $(m_{0},m_{1},\sigma)=A_{0}(\pk).$
                \item $c=\Enc(\pk,m_{b}),\ b\sim\U(\{0,1\}).$
                \item $\hat{b}=A_{1}(c,\sigma).$
            \end{enumerate}
        \end{minipage}
    }
\end{center}
The cryptosystem satisfies IND-CPA if the challenger wins the game, that is, the adversary's advantage $|\Pr(b=\hat{b})-1/2|$ is negligible.
We now show the proof of \propref{prop:IND-CPA} by reducing the IND-CPA game of the dynamic ElGamal encryption to $\mathsf{IND\mathchar`-CPA}$ of the normal ElGamal encryption.

\begin{proof}
    Consider the IND-CPA game of the dynamic ElGamal encrypion denoted by $\mathsf{IND\mathchar`-CPA_{dyn}^{0}}$.
    \begin{center}
        \fbox{
            \begin{minipage}{.9\columnwidth}
                $\mathsf{IND\mathchar`-CPA_{dyn}^{0}}$
                \begin{enumerate}
                    \item $(\pk_{0},\sk_{0})=\Gen(k).$
                    \item $(m_{0},m_{1},\sigma)=A_{0}(\pk_{0}).$
                    \item $c_{0}=\Enc(\pk_{0},m_{b}),\ b\sim\U(\{0,1\}).$
                    \item Set $t\leftarrow 0$.
                    \item $(\pk_{t+1},\sk_{t+1})=T_{\K}(\pk_{t},\sk_{t}).$
                    \item $c_{t+1}=T_{\C}(c_{t}).$
                    \item Set $t\leftarrow t+1$, and repeat 5) to 7) as needed.
                    \item $\hat{b}=A_{1}(\{c_{n}\}_{n=0}^{t},\{\pk_{n}\}_{n=0}^{t-1},\{\sk_{n}\}_{n=0}^{t-1},\sigma).$
                \end{enumerate}
            \end{minipage}
        }
    \end{center}

    In the modified game $\mathsf{IND\mathchar`-CPA_{dyn}^{1}}$, fix the lines 5) and 6) of $\mathsf{IND\mathchar`-CPA_{dyn}^{0}}$ to $(\pk_{t+1},\sk_{t+1})=\Gen(k)$ and $c_{t+1}=\Enc(\pk_{t+1},m_{b})$, respectively.
    \begin{center}
        \fbox{
            \begin{minipage}{.9\columnwidth}
                $\mathsf{IND\mathchar`-CPA_{dyn}^{1}}$
                \begin{enumerate}
                    \item $(\pk_{0},\sk_{0})=\Gen(k).$
                    \item $(m_{0},m_{1},\sigma)=A_{0}(\pk_{0}).$
                    \item $c_{0}=\Enc(\pk_{0},m_{b}),\ b\sim\U(\{0,1\}).$
                    \item Set $t\leftarrow 0$.
                    \item $(\pk_{t+1},\sk_{t+1})=\Gen(k).$
                    \item $c_{t+1}=\Enc(\pk_{t+1},m_{b}).$
                    \item Set $t\leftarrow t+1$, and repeat 5) to 7) as needed.
                    \item $\hat{b}=A_{1}(\{c_{n}\}_{n=0}^{t},\{\pk_{n}\}_{n=0}^{t-1},\{\sk_{n}\}_{n=0}^{t-1},\sigma).$
                \end{enumerate}
            \end{minipage}
        }
    \end{center}
    From the proof of \propref{prop:negligible}, this modification does not change any probability in $\mathsf{IND\mathchar`-CPA_{dyn}^{0}}$ since operations of $T_{\K}$ and $T_{\C}$ are completely random.
    Therefore, the difference between the adversary's advantages in $\mathsf{IND\mathchar`-CPA_{dyn}^{0}}$ and $\mathsf{IND\mathchar`-CPA_{dyn}^{1}}$ is negligible.
    Furthermore, the modification concludes $\{c_{n}\}_{n=0}^{t-1}$, $\{\pk_{n}\}_{n=0}^{t-1}$ and $\{\sk_{n}\}_{n=0}^{t-1}$ give no information about $c_{t}$ and $m_{b}$.
    Thus, we obtain the equivalent game of $\mathsf{IND\mathchar`-CPA_{dyn}^{1}}$, which is denoted by $\mathsf{IND\mathchar`-CPA_{dyn}^{2}}$.
    \begin{center}
        \fbox{
            \begin{minipage}{.9\columnwidth}
                $\mathsf{IND\mathchar`-CPA_{dyn}^{2}}$
                \begin{enumerate}
                    \item $(\pk_{0},\sk_{0})=\Gen(k).$
                    \item $(m_{0},m_{1},\sigma)=A_{0}(\pk_{0}).$
                    \item $c_{0}=\Enc(\pk_{0},m_{b}),\ b\sim\U(\{0,1\}).$
                    \item Set $t\leftarrow 0$.
                    \item $(\pk_{t+1},\sk_{t+1})=\Gen(k)$
                    \item $c_{t+1}=\Enc(\pk_{t+1},m_{b}).$
                    \item Set $t\leftarrow t+1$, and repeat 5) to 7) as needed.
                    \item $\hat{b}=A_{1}(c_{t},\sigma).$
                \end{enumerate}
            \end{minipage}
        }
    \end{center}

    $\mathsf{IND\mathchar`-CPA_{dyn}^{2}}$ is clearly the same as $\mathsf{IND\mathchar`-CPA}$ because the repetition of 5) to 7) does not affect the adversary's advantage.
    This fact concludes that the difference between adversary's advantages in $\mathsf{IND\mathchar`-CPA_{dyn}^{0}}$ and $\mathsf{IND\mathchar`-CPA}$ is negligible.
    In addition, the advantage in $\mathsf{IND\mathchar`-CPA}$ of the ElGamal encryption is negligible under the DDH assumption.
    From the above discussions, the dynamic ElGamal encryption satisfies IND-CPA at time $t$ under the DDH assumption.
\end{proof}

\subsection{Proof of \lemref{lem:opt_input}}\label{proof:opt_input}

\begin{proof}
    The problem that minimizes $J_{T}$ in \eqref{eq:cost} is a form of the typical finite-horizon discrete-time stochastic linear quadratic regulator problem~\cite{Kirk04}.
    Hence, the optimal control sequence is given as $u_{t}=-(B_{p}^{\top}P_{t+1}B_{p})^{-1}B_{p}^{\top}P_{t+1}A_{p}x_{t}$ if $B_{p}^{\top}P_{t+1}B_{p}$ is invertible, where $P_{t}=A_{p}^{\top}P_{t+1}A_{p}-A_{p}^{\top}P_{t+1}B_{p}(B_{p}^{\top}P_{t+1}B_{p})^{-1}B_{p}^{\top}P_{t+1}A_{p}+I$, and $P_{T}=I$.

    $P_{T}$ and $P_{T-1}=A_{p}^{\top}\{I-B_{p}(B_{p}^{\top}B_{p})^{-1}B_{p}^{\top}\}A_{p}+I$ are positive definite because $B_{p}^{\top}B_{p}$ is invertible from \lemref{lem:invertible}, and $B_{p}(B_{p}^{\top}B_{p})^{-1}B_{p}^{\top}$ is a hat matrix whose eigenvalues consist of $n$ ones and $m-n$ zeros.
    Assume that $P_{T-i}>0$ for $1\le i<T$, then $P_{T-i-1}=A_{p}^{\top}\{P_{T-i}-P_{T-i}B_{p}(B_{p}^{\top}P_{T-i}B_{p})^{-1}B_{p}^{\top}P_{T-i}\}A_{p}+I$, where $B_{p}^{\top}P_{T-i}B_{p}$ is positive definite and invertible from \lemref{lem:invertible}.
    Additionally,
    \[
        \begin{bmatrix}
            P_{T-i} & P_{T-i}B_{p} \\
            B_{p}^{\top}P_{T-i} & B_{p}^{\top}P_{T-i}B_{p}
        \end{bmatrix}>0
    \]
    since $P_{T-i}>0$.
    Thus, from Schur complement, $P_{T-i}-P_{T-i}B_{p}(B_{p}^{\top}P_{T-i}B_{p})^{-1}B_{p}^{\top}P_{T-i}>0$.
    Therefore, $P_{T-i-1}$ is positive definite, and $B_{p}^{\top}P_{t+1}B_{p}$ is invertible for all $t\in[0,T)$.
\end{proof}

\subsection{Proof of \propref{prop:sum_tr_Wt}}\label{proof:sum_tr_Wt}

\begin{proof}
    The cases of $T=1,2$ can be easily confirmed.
    For $T\ge3$, it follows that
    \begin{align*}
        E(T)
        &\!=\!E(T-1)+\sum_{i=0}^{T-2}\tr(A^{i}(A^{i})^{\top})+\tr(A^{T-1}(A^{T-1})^{\top}), \\
        &\!=\!E(T\!-\!1)\!+\!(E(T\!-\!1)\!-\!E(T\!-\!2))\!+\!\tr(A^{T\!-\!1}(A^{T\!-\!1})^{\!\top}), \\
        &\!=\!2E(T-1)-E(T-2)+\tr(A^{T-1}(A^{T-1})^{\top}).
    \end{align*}
    This completes the proof.
\end{proof}

\subsection{Proof of \thmref{thm:design}}\label{proof:design}

\begin{proof}
    From \eqref{eq:gamma_simple}, \eqref{eq:gramian}, and \eqref{eq:T_star}, we have the minimum time step $T^{\star}$ satisfying $\gamma(T^{\star},F^{\star})<\gamma_{c}$.
    Then, from \eqref{eq:tau} and \eqref{eq:k_star}, the key length $k^{\star}$ fulfills $\tau(T^{\star},k^{\star})>\tau_{c}$.
    Therefore, the encrypted control system is secure.
\end{proof}

\bibliographystyle{IEEEtran}
\bibliography{encrypted_control_and_optimization}

\begin{thebibliography}{10}
\providecommand{\url}[1]{#1}
\csname url@rmstyle\endcsname
\providecommand{\newblock}{\relax}
\providecommand{\bibinfo}[2]{#2}
\providecommand\BIBentrySTDinterwordspacing{\spaceskip=0pt\relax}
\providecommand\BIBentryALTinterwordstretchfactor{4}
\providecommand\BIBentryALTinterwordspacing{\spaceskip=\fontdimen2\font plus
\BIBentryALTinterwordstretchfactor\fontdimen3\font minus
  \fontdimen4\font\relax}
\providecommand\BIBforeignlanguage[2]{{%
\expandafter\ifx\csname l@#1\endcsname\relax
\typeout{** WARNING: IEEEtran.bst: No hyphenation pattern has been}%
\typeout{** loaded for the language `#1'. Using the pattern for}%
\typeout{** the default language instead.}%
\else
\language=\csname l@#1\endcsname
\fi
#2}}

\bibitem{Lee08}
E.~A. Lee, ``Cyber physical systems: Design challenges,'' in \emph{11th IEEE
  International Symposium on Object and Component-Oriented Real-Time
  Distributed Computing}, 2008, pp. 363--369.

\bibitem{Humayed17}
A.~Humayed, J.~Lin, F.~Li, and B.~Luo, ``Cyber-physical systems security -- a
  survey,'' \emph{IEEE Internet of Things Journal}, vol.~4, no.~6, pp.
  1802--1831, 2017.

\bibitem{Dibaji19}
S.~M. Dibaji, M.~Pirani, D.~B. Flamholz, A.~M. Annaswamy, K.~H. Johansson, and
  A.~Chakrabortty, ``A systems and control perspective of {CPS} security,''
  \emph{Annual Reviews in Control}, vol.~47, pp. 394--411, 2019.

\bibitem{Wang15}
Z.~Wang, H.~Song, D.~W. Watkins, K.~G. Ong, P.~Xue, Q.~Yang, and X.~Shi,
  ``Cyber-physical systems for water sustainability: Challenges and
  opportunities,'' \emph{IEEE Communications Magazine}, vol.~53, no.~5, pp.
  216--222, 2015.

\bibitem{Teixeira15_1}
A.~Teixeira, I.~Shames, H.~Sandberg, and K.~H. Johansson, ``A secure control
  framework for resource-limited adversaries,'' \emph{Automatica}, vol.~51, pp.
  135--148, 2015.

\bibitem{Chong19}
M.~S. Chong, H.~Sandberg, and A.~M.~H. Teixeira, ``A tutorial introduction to
  security and privacy for cyber-physical systems,'' in \emph{European Control
  Conference}, 2019, pp. 968--978.

\bibitem{Nekouei19}
E.~Nekouei, T.~Tanaka, M.~Skoglund, and K.~H. Johansson,
  ``Information-theoretic approaches to privacy in estimation and control,''
  \emph{Annual Reviews in Control}, vol.~47, pp. 412--422, 2019.

\bibitem{Dwork06}
C.~Dwork, F.~McSherry, K.~Nissim, and A.~Smith, ``Calibrating noise to
  sensitivity in private data analysis,'' in \emph{Theory of Cryptography},
  2006, pp. 265--284.

\bibitem{Cortes16}
J.~Cort\'{e}s, G.~E. Dullerud, S.~Han, J.~Le~Ny, S.~Mitra, and G.~J. Pappas,
  ``Differential privacy in control and network systems,'' in \emph{Conference
  on Decision and Control}, 2016, pp. 4252--4272.

\bibitem{Hassan20}
M.~U. Hassan, M.~H. Rehmani, and J.~Chen, ``Differential privacy techniques for
  cyber physical systems: A survey,'' \emph{IEEE Communications Surveys \&
  Tutorials}, vol.~22, no.~1, pp. 746--789, 2020.

\bibitem{Murguia20}
C.~Murguia, I.~Shames, J.~Ruths, and D.~Ne\v{s}i\'{c}, ``Security metrics and
  synthesis of secure control systems,'' \emph{Automatica}, vol. 115, p.
  108757, 2020.

\bibitem{Sandberg10}
H.~Sandberg, A.~Teixeira, and K.~H. Johansson, ``On security indices for state
  estimators in power networks,'' in \emph{First Workshop on Secure Control
  Systems}, 2010.

\bibitem{Milosevic20}
J.~Milo\v{s}evi\'{c}, A.~Teixeira, K.~H. Johansson, and H.~Sandberg, ``Actuator
  security indices based on perfect undetectability: Computation, robustness,
  and sensor placement,'' \emph{IEEE Transactions on Automatic Control},
  vol.~65, no.~9, pp. 3816--3831, 2020.

\bibitem{Feng21}
S.~Feng, A.~Cetinkaya, H.~Ishii, P.~Tesi, and C.~D. Persis, ``Networked control
  under {DoS} attacks: Tradeoffs between resilience and data rate,'' \emph{IEEE
  Transactions on Automatic Control}, vol.~66, no.~1, pp. 460--467, 2021.

\bibitem{Cetinkaya20}
A.~Cetinkaya, P.~Arcaini, H.~Ishii, and T.~Hayakawa, ``A search-based approach
  to identifying jamming attacks and defense policies in wireless networked
  control,'' in \emph{IEEE Conference on Decision and Control}, 2020, pp.
  5717--5724.

\bibitem{Darup20_3}
M.~S. Darup, A.~B. Alexandru, D.~E. Quevedo, and G.~J. Pappas, ``Encrypted
  control for networked systems -- {An} illustrative introduction and current
  challenges,'' arXiv:2010.00268, Oct 2020.

\bibitem{Kogiso18_1}
K.~Kogiso, ``Upper-bound analysis of performance degradation in encrypted
  control system,'' in \emph{American Control Conference}, 2018, pp.
  1250--1255.

\bibitem{Kishida19}
M.~Kishida, ``Encrypted control system with quantiser,'' \emph{IET Control
  Theory \& Applications}, vol.~13, no.~1, pp. 146--151, 2019.

\bibitem{Teranishi19_3}
K.~Teranishi, N.~Shimada, and K.~Kogiso, ``Stability analysis and dynamic
  quantizer for controller encryption,'' in \emph{IEEE Conference on Decision
  and Control}, 2019, pp. 7184--7189.

\bibitem{Kogiso15}
K.~Kogiso and T.~Fujita, ``Cyber-security enhancement of networked control
  systems using homomorphic encryption,'' in \emph{IEEE Conference on Decision
  and Control}, 2015, pp. 6836--6843.

\bibitem{Farokhi17}
F.~Farokhi, I.~Shames, and N.~Batterham, ``Secure and private control using
  semi-homomorphic encryption,'' \emph{Control Engineering Practice}, vol.~67,
  pp. 13--20, 2017.

\bibitem{Kim16}
J.~Kim, C.~Lee, H.~Shim, J.~H. Cheon, A.~Kim, M.~Kim, and Y.~Song, ``Encrypting
  controller using fully homomorphic encryption for security of cyber-physical
  systems,'' \emph{IFAC-PapersOnLine}, vol.~49, no.~22, pp. 175--180, 2016.

\bibitem{Darup18_2}
M.~S. Darup, A.~Redder, I.~Shames, F.~Farokhi, and D.~E. Quevedo, ``Towards
  encrypted {MPC} for linear constrained systems,'' \emph{IEEE Control Systems
  Letters}, vol.~2, no.~2, pp. 195--200, 2018.

\bibitem{Darup19_1}
M.~S. Darup, A.~Redder, and D.~E. Quevedo, ``Encrypted cooperative control
  based on structured feedback,'' \emph{IEEE Control Systems Letters}, vol.~3,
  no.~1, pp. 37--42, 2019.

\bibitem{Alexandru20_3}
A.~B. Alexandru, K.~Gatsis, Y.~Shoukry, S.~A. Seshia, P.~Tabuada, and G.~J.
  Pappas, ``Cloud-based quadratic optimization with partially homomorphic
  encryption,'' \emph{IEEE Transactions on Automatic Control}, 2020, (early
  access).

\bibitem{Alexandru20_2}
A.~B. Alexandru, A.~Tsiamis, and G.~J. Pappas, ``Towards private data-driven
  control,'' in \emph{IEEE Conference on Decision and Control}, 2020, pp.
  5449--5456.

\bibitem{Fritz19}
R.~Fritz, M.~Fauser, and P.~Zhang, ``Controller encryption for discrete event
  systems,'' in \emph{American Control Conference}, 2019, pp. 5633--5638.

\bibitem{Ristic20}
M.~Ristic, B.~Noack, and U.~D. Hanebeck, ``Secure fast covariance intersection
  using partially homomorphic and order revealing encryption schemes,''
  \emph{IEEE Control Systems Letters}, vol.~5, no.~1, pp. 217--222, 2021.

\bibitem{Suh21}
J.~Suh and T.~Tanaka, ``Encrypted value iteration and temporal difference
  learning over leveled homomorphic encryption,'' arXiv:2103.11065, Mar 2021.

\bibitem{Cheon18_1}
J.~H. Cheon, K.~Han, S.~M. Hong, H.~J. Kim, J.~Kim, S.~Kim, H.~Seo, H.~Shim,
  and Y.~Song, ``Toward a secure drone system: Flying with real-time
  homomorphic authenticated encryption,'' \emph{IEEE Access}, vol.~6, pp.
  24\,325--24\,339, 2018.

\bibitem{Teranishi20_4}
K.~Teranishi, N.~Shimada, and K.~Kogiso, ``Development and examination of fog
  computing-based encrypted control system,'' \emph{IEEE Robotics and
  Automation Letters}, vol.~5, no.~3, pp. 4642--4648, 2020.

\bibitem{Tran20}
J.~Tran, F.~Farokhi, M.~Cantoni, and I.~Shames, ``Implementing homomorphic
  encryption based secure feedback control,'' \emph{Control Engineering
  Practice}, vol.~97, p. 104350, 2020.

\bibitem{Teranishi20_5}
K.~Teranishi, N.~Shimada, and K.~Kogiso, ``Stability-guaranteed dynamic
  {ElGamal} cryptosystem for encrypted control systems,'' \emph{IET Control
  Theory \& Applications}, vol.~14, no.~16, pp. 2242--2252, 2020.

\bibitem{Katz15}
J.~Katz and Y.~Lindell, \emph{Introduction to Modern Cryptography}, D.~R.
  Stinson, Ed.\hskip 1em plus 0.5em minus 0.4em\relax Boca Raton: CRC Press,
  2015.

\bibitem{Acar18}
A.~Acar, H.~Aksu, A.~S. Uluagac, and M.~Conti, ``A survey on homomorphic
  encryption schemes: Theory and implementation,'' \emph{ACM Computing
  Surveys}, vol.~51, no.~4, 2018.

\bibitem{ElGamal85}
T.~Elgamal, ``A public key cryptosystem and a signature scheme based on
  discrete logarithms,'' \emph{IEEE Transactions on Information Theory},
  vol.~31, no.~4, pp. 469--472, 1985.

\bibitem{Bernstein93}
D.~J. Bernstein and A.~K. Lenstra, ``A general number field sieve
  implementation,'' in \emph{The development of the number field sieve}, A.~K.
  Lenstra and H.~W. Lenstra, Eds.\hskip 1em plus 0.5em minus 0.4em\relax
  Berlin, Heidelberg: Springer Berlin Heidelberg, 1993, pp. 103--126.

\bibitem{Chapter}
K.~Kogiso, \emph{Privacy of Dynamical Systems}.\hskip 1em plus 0.5em minus
  0.4em\relax Singapore: Springer Nature, 2020, ch. Encrypted Control Using
  Multiplicative Homomorphic Encryption.

\bibitem{Kearns94}
M.~J.~Kearns and U.~Vazirani, \emph{An Introduction to Computational Learning
  Theory}.\hskip 1em plus 0.5em minus 0.4em\relax MIT Press, 1994.

\bibitem{Farokhi18}
F.~Farokhi and H.~Sandberg, ``{Fisher} information as a measure of privacy:
  Preserving privacy of households with smart meters using batteries,''
  \emph{IEEE Transactions on Smart Grid}, vol.~9, no.~5, pp. 4726--4734, 2018.

\bibitem{Farokhi19}
------, ``Ensuring privacy with constrained additive noise by minimizing
  {Fisher} information,'' \emph{Automatica}, vol.~99, pp. 275--288, 2019.

\bibitem{Farokhi20_2}
F.~Farokhi, ``Privacy-preserving constrained quadratic optimization with
  {Fisher} information,'' \emph{IEEE Signal Processing Letters}, vol.~27, pp.
  545--549, 2020.

\bibitem{Ziemann20}
I.~Ziemann and H.~Sandberg, ``Parameter privacy versus control performance:
  {Fisher} information regularized control,'' in \emph{American Control
  Conference}, 2020, pp. 1259--1265.

\bibitem{Hoffstein08}
J.~Hoffstein, J.~Pipher, and J.~H. Silverman, \emph{An Introduction to
  Mathematical Cryptography}, S.~Axler and K.~A. Ribet, Eds.\hskip 1em plus
  0.5em minus 0.4em\relax New York: Springer-Verlag, 2008.

\bibitem{Rivest78}
R.~L. Rivest, A.~Shamir, and L.~M. Adleman, ``A method for obtaining digital
  signatures and public-key cryptosystems,'' \emph{Communications of the ACM},
  vol.~21, no.~2, pp. 120--126, 1978.

\bibitem{Paillier99}
P.~Paillier, ``Public-key cryptosystems based on composite degree residuosity
  classes,'' in \emph{Proceedings of the 17th International Conference on
  Theory and Application of Cryptographic Techniques}, 1999, pp. 223--238.

\bibitem{Teranishi21}
K.~Teranishi, T.~Sadamoto, and K.~Kogiso, ``Input-output history feedback
  controller for encrypted control with leveled fully homomorphic encryption,''
  arXiv:2109.10718, Sep 2021.

\bibitem{Koopman31}
B.~O. Koopman, ``Hamiltonian systems and transformation in {Hilbert} space,''
  \emph{Proceedings of the National Academy of Sciences}, vol.~17, no.~5, pp.
  315--318, 1931.

\bibitem{Burgess57}
D.~A. Burgess, ``The distribution of quadratic residues and non-residues,''
  \emph{Mathematika 4}, pp. 106--112, 1957.

\bibitem{Shoup06}
V.~Shoup, ``Sequences of games: A tool for taming complexity in security
  proofs,'' Cryptology ePrint Archive, Report 2004/332, Jan 2006.

\bibitem{Kirk04}
D.~E. Kirk, \emph{Optimal control theory: An introduction}.\hskip 1em plus
  0.5em minus 0.4em\relax New York: Dover Publications, 2004.

\end{thebibliography}

\begin{IEEEbiography}[{\includegraphics[width=1in,height=1.25in,clip,keepaspectratio]{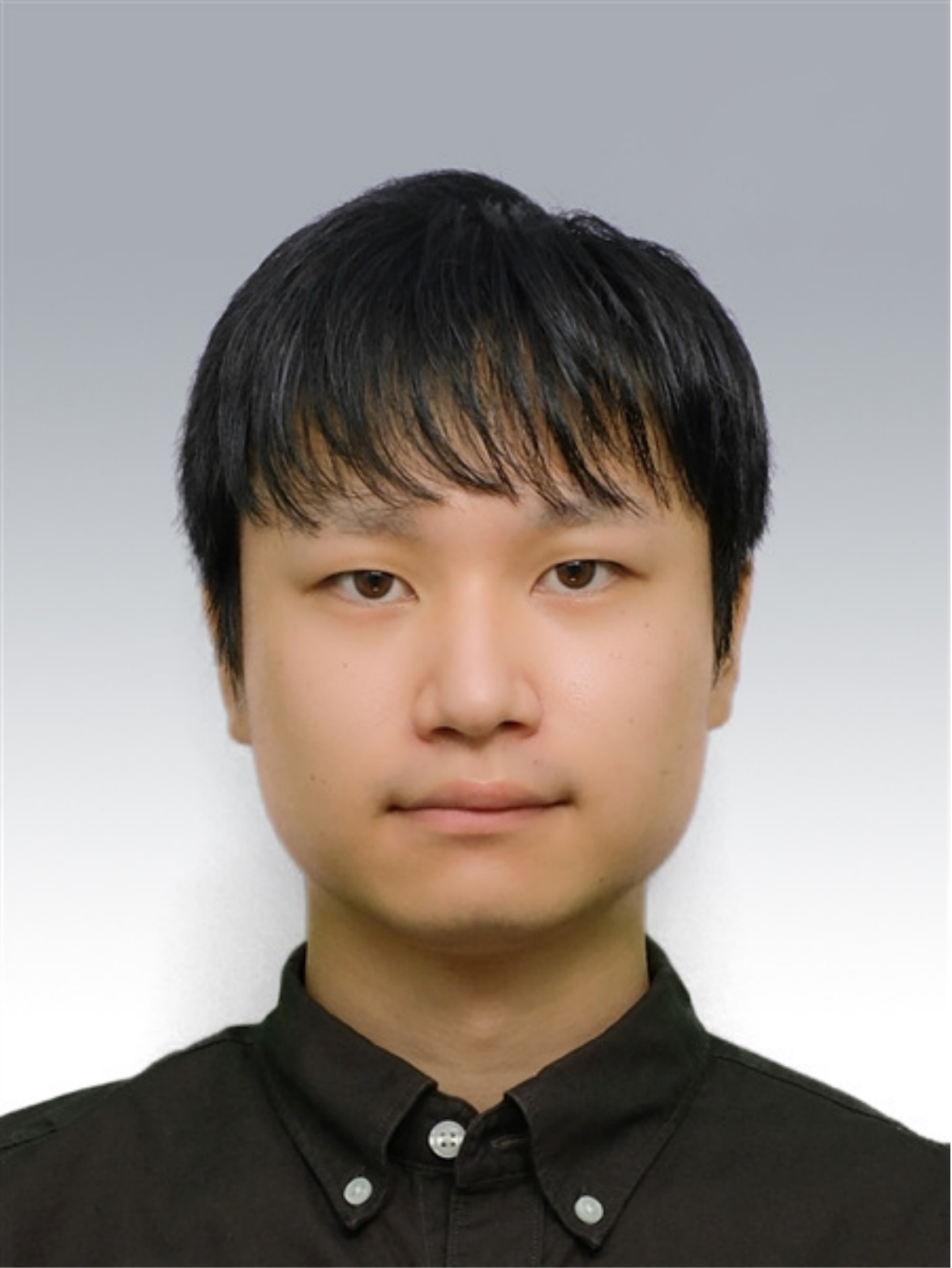}}]
    {Kaoru Teranishi} received the B.S. degree in electromechanical engineering from National Institute of Technology, Ishikawa College, Ishikawa, Japan, in 2019.
    He also obtained the M.S. degree in Mechanical and Intelligent Systems Engineering from The University of Electro-Communications, Tokyo, Japan, in 2021.
    He is currently a Ph.D. student at The University of Electro-Communications.
    From October 2019 to September 2020, he was a visiting scholar of the Georgia Institute of Technology, GA, USA.
    Since April 2021, he has been a Research Fellow of Japan Society for the Promotion of Science.
    His research interests include control theory and cryptography for cyber-security of control systems.
\end{IEEEbiography}

\begin{IEEEbiography}[{\includegraphics[width=1in,height=1.25in,clip,keepaspectratio]{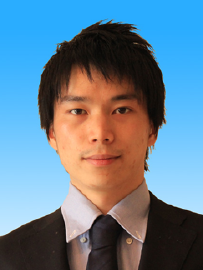}}]
    {Tomonori Sadamoto} received the Ph.D. degree from the Tokyo Institute of Technology, Tokyo, Japan in 2015. From June in 2015 to March in 2016, he was a Visiting Researcher at School of Electrical Engineering, Royal Institute of Technology, Stockholm, Sweden. From April 2016 to August 2016, he was a researcher with the Department of Systems and Control Engineering, Graduate School of Engineering, Tokyo Institute of Technology. From August 2016 to November 2018, he was a specially appointed Assistant Professor with the same department. Since November 2018, he has been assistant professor with Department of Mechanical and Intelligent Systems Engineering in the University of Electro-Communications. He was named as a finalist of the European Control Conference Best Student-Paper Award in 2014. He received Research encouragement award from The Funai Foundation for Informaiton Technology in 2019, and received IEEE Control Systems Magazine Outstanding Paper Award in 2020. 
\end{IEEEbiography}

\begin{IEEEbiography}[{\includegraphics[width=1in,height=1.25in,clip,keepaspectratio]{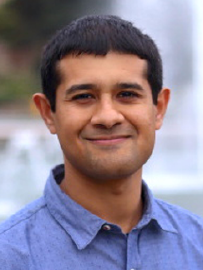}}]
    {Aranya Chakrabortty} received the Ph.D. degree in Electrical Engineering from Rensselaer Polytechnic Institute, NY in 2008. From 2008 to 2009 he was a postdoctoral research associate at University of Washington, Seattle, WA. From 2009 to 2010 he was an assistant professor at Texas Tech University, Lubbock, TX. Since 2010 he has joined the Electrical and Computer Engineering department at North Carolina State University, Raleigh, NC, where he is currently a Professor. His research interests are in all branches of control theory with applications to electric power systems. He received the NSF CAREER award in 2011. He was named as a University Faculty Scholar by the NC State Provost's office in 2019.
\end{IEEEbiography}

\begin{IEEEbiography}[{\includegraphics[width=1in,height=1.25in,clip,keepaspectratio]{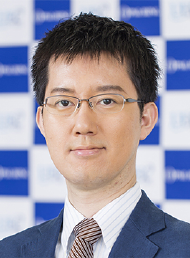}}]
    {Kiminao Kogiso} received the B.S., M.S., and Ph.D. degrees in mechanical engineering from Osaka University, Japan, in 1999, 2001, and 2004, respectively.
    
    He was a postdoctoral researcher in the 21st Century COE Program in 2004 and became an Assistant Professor in the Department of Information Systems, Nara Institute of Science and Technology, Nara, Japan, in 2005.
    Since March 2014, he has been an Associate Professor in the Department of Mechanical and Intelligent Systems Engineering, The University of Electro-Communications, Tokyo, Japan.
    From November 2010 to December 2011, he was a visiting scholar of the Georgia Institute of Technology, GA, USA.
    His research interests include constrained control, control of decision makers, cyber-security of control systems, and their applications.
\end{IEEEbiography}

\end{document}